\newcommand{\be}{\begin{equation}}
	\newcommand{\ee}{\end{equation}}
\newcommand{\ba}{\begin{eqnarray}}
	\newcommand{\ea}{\end{eqnarray}}
\newcommand{\tr}{\operatorname{Tr}}
\newtheorem{definition}{Definition}
\newtheorem{thm}{Theorem}
\newtheorem{proposition}{Proposition}
\newtheorem{Lemma}{Lemma}
\newtheorem{cor}{Corollary}
\begin{document}

	\title{Exploring Non-Markovianity in Ergodic Channels: Measuring Memory Retention through Ergotropy}

\author{Ritam Basu\textsuperscript{}}
\email{ritam.basu@research.iiit.ac.in}
    \affiliation{Center for Quantum Science and Technology and Center for Security Theory and Algorithmic Research, International Institute of Information Technology, Gachibowli, Hyderabad, India\textsuperscript{}}
    \author{Anish Chakraborty\textsuperscript{}}
\email{anishsphs51@gmail.com }
    \affiliation{Indian Institute of Science Education And Research Kolkata
Mohanpur, Nadia - 741 246
West Bengal, India}
\author{Himanshu Badhani\textsuperscript{}}
\email{himanshubadhani@gmail.com }
\affiliation{Optics \& Quantum Information Group, The Institute of Mathematical Sciences, C. I. T. Campus, Taramani, Chennai - 600113, India}
    \affiliation{Homi Bhabha National Institute, Training School Complex, Anushakti Nagar, Mumbai 400094, India}
\author{Mir Alimuddin\textsuperscript{}}
\email{aliphy80@gmail.com }
    \affiliation{ICFO-Institut de Ciencies Fotoniques, The Barcelona Institute of Science and Technology, Av. Carl Friedrich Gauss 3, 08860 Castelldefels (Barcelona), Spain.}
       \author{Samyadeb Bhattacharya} 
         \email{samyadeb.b@iiit.ac.in}
        \affiliation{Center for Quantum Science and Technology and Center for Security Theory and Algorithmic Research, International Institute of Information Technology, Gachibowli, Hyderabad, India\textsuperscript{}}

        \begin{abstract}
            In this work, we introduce and characterize a broad class of quantum operations with a unique fixed point, termed quantum ergodic channels. We derive Lindblad-type master equations for these channels in arbitrary finite dimensions and analyze their non-Markovian dynamics using established measures. When the fixed point is a passive state, the channels exhibit ergotropy dynamics with notable thermodynamic implications. Specifically, under Markovian processes, ergotropy—a measure of the extractable work from a system under unitary evolution—monotonically decreases. However, in non-Markovian dynamics, ergotropy fluctuates, leading to a backflow effect that highlights memory-induced resource recovery. Our findings suggest that this ergotropy backflow could serve as an operationally meaningful indicator of non-Markovianity, offering new perspectives on the interplay between memory effects and thermodynamic behavior in open quantum systems. This study enhances the theoretical framework for understanding energy dynamics under ergodic channels and highlights new avenues for exploring the implications of memory effects in quantum batteries.

        \end{abstract}
	
\maketitle

\section{Introduction}
In the past two decades, much efforts have been dedicated to understand and analyse the aspect of quantum non-Markovianity \citep{nm1,nm2,nm3,nm4,nm5,nm6,nm7,nm8,nm9,nm10,nm11,nm12,nm13,nm14,nm15,nm16,nm17,nm19,nm20}. Through the characterization of  complete positive (CP)-divisibility of a channel \citep{nm8} and information backflow \citep{nm6}, a particular characterization of quantum non-Markovianity has been established, where a quantum channel, which can not be realised as a concatenation of arbitrary number of linear CP maps is understood as a non-Markovian quantum channel. Additionally, it has also been established that, for information backflow, i.e. non monotonic behaviour of distance measures which are monotones under CP divisible maps like trace distance (or pseudo distance distinguishibility measures like relative entropies) to occur, it is necessary (but not sufficient) to have CP-indivisible quantum channels \citep{nm21}. Though CP-indivisibility is not the most general identification of non-Markovianity and there are different spectacles to view this interesting dynamical feature \citep{nm20,nm20a}, it is sufficiently general to consider quantum non-Markovianity from the backdrop of indivisible operations. In the course of the ongoing plethora of research, it has been established that quantum non-Markovianity can be considered as a resource in various information theoretic protocols \citep{nm22,nm23,nm24}. Moreover, from the context of experimentally implementable situations, it is natural to consider non-Markovian quantum operations, since for those scenario less stringent apriori physical constraints are involved. It is therefore important to classify non-Markovian quantum dynamics from the perspective of channels with definitive geometric structure. Such analysis enables a robust classification of a scientific phenomenon under ongoing scrutiny to achieve a more profound understanding. Such attempts has been previously made for quantum non-Markovian dynamics for generalized depolarizing channels in arbitrary dimensions \citep{nm25}. Depolarizing channels can be understood as a specific case of unital operations having maximally mixed state as the singular fixed point of the dynamics. Though the analysis of non-Markovianity for generalized depolarizing operations serves its purpose of both qualitative and quantitative characterization of the said phenomenon, it is constrained to a restricted class of quantum operations. It is therefore imperative to generalize the classification for a much more extensive class of quantum channels. In this article, we generalize the classification of non-Markovianity from the backdrop of quantum channels having singular fixed point, intimately connected to an intriguing class of operations coined as ergodic channels.  

A physical process is said to be ergodic if the statistical properties of it can be estimated completely from the long time realisation of the said operation. In the study of thermal relaxation, this concept of ergodicity plays a pivotal role \citep{ergo1,ergo2,ergo2a,extra1,extra2}. It is also found to be important in the theory of quantum control \citep{ergo3,ergo4,ergo5,ergo6}, quantum communication \citep{ergo7} and many more \citep{ergo8,ergo9,ergo10}. We thus classify ergodic quantum channels as quantum channels having unique fixed points. 
In this work, we restrict ourselves to invertible quantum operations; i.e. if $\Phi$ represents a certain quantum operation, taking an element from a Hilbert space $\mathcal{H}_A$ to $\mathcal{H}_B$, then its inverse operation $\Phi^{-1}$ exists, which maps an element of $\mathcal{H}_B$ to an element of $\mathcal{H}_A$. It has been shown previously that for such kind of dynamical maps, a Lindblad type exact master equation always exists \citep{master1,master2}. We construct generalized Lindblad like master equations for such kind of invertible channels in arbitrary dimensions, which are ergodic. It is notable that the generalized depolarizing channels fall into a subcategory of the more general class of operations that we consider here. In other words, we generalize the study presented in \citep{nm25}, which is restricted to depolarizing channels in arbitrary dimensions.  

Intriguingly, we observe that the class of ergodic channels discussed here shares structural similarities with passivity-preserving channels \cite{Alimuddin2020I,Singh2021,Swati2023}, where the fixed point of the dynamics is a passive state. Motivated by this connection, we first explore the role of free energy in characterizing ergodic dynamics. For thermal ergodic channels where fixed points are completely passive states (Gibbs states), free-energy provide a natural framework to understand memory effects. For instance, relative entropy—a widely studied information-theoretic distance measure—has been linked to free energy differences and thermodynamic work in specific regimes \cite{Horodecki2013}. We emphasize how these connections enhance the physical understanding of thermal ergodic channels. For generic ergodic channels, however, where the fixed point may not be a Gibbs state, the association with free energy no longer holds. In such cases, the monotonicity of traditional distance measures like trace distance, or pseudo distance measures like relative entropy persists, but their above thermodynamic relevance vanishes. This shifts the focus to ergotropy—a thermodynamic quantity distinct from free energy that quantifies the maximum extractable work from a system under unitary operations \cite{Pusz1978,Lenard1978,Allahverdyan2004}. Unlike free energy, ergotropy emerges from energy expectations rather than standard distance measures, resulting in unique behavior under ergodic dynamics.

Our analysis reveals that under some specific class of Markovian ergodic channels, the ergotropy of a system decreases monotonically as it approaches the fixed point, eventually reaching a passive state with zero ergotropy. Conversely, non-Markovian ergodic dynamics induces memory effects that lead to ergotropy backflows, causing it to fluctuate over time. This highlights ergotropy as a measure of memory retention in open quantum systems, for a class of ``ergodic" quantum channels. Memory-induced dynamics allow the system to temporarily regain activity, offering a mechanism to retain and utilize ergotropy before it is entirely lost to the environment. Understanding these memory effects has profound implications for optimizing the performance of quantum systems under open dynamics. Specifically, leveraging memory-induced dynamics could enable better retention of ergotropy, enhancing its utility as a thermodynamic resource in quantum batteries. These findings shed light on the interplay between non-Markovianity and environment assisted dynamics, paving the way for a deeper understanding of how memory effects can support the functionality of quantum batteries in practical applications.

\section{\label{sec:level1}generalized ergodic channel with Lindblad evolution} 

In this section we define and construct ergodic quantum operations in arbitrary dimensions and their Lindblad generators respectively. We further elucidate on the ergodic nature of the dynamics, which enables us to define non-ergodicity of a dynamics via a distance measure. Having Lindblad type generators enables us to consider their non-Markovian counterparts and analyse their properties. For a specific example, we restrict ourselves to qubit dynamics with an arbitrary unique fixed point. In the following, we start with the technical preliminaries essential for the study. Before we start, we want to clarify a convention of notation used in this work, that arbitrary quantum states of any dimension are represented by the notation $\zeta$ and for a qubit are denoted by $\rho$. This convention is followed throughout the manuscript. Other states with specific properties are defined accordingly, as we go on. 

\subsection{Technical Preliminaries}
We begin by defining quantum non-Markovianity, which, in this context, refers to quantum channels that are characterized by their indivisibility. The definition of a CP-divisible quantum channel is as follows. 
\begin{definition}\label{def1}
 A quantum channel $\Phi(t,~t_0)[\zeta(t_0)]=\zeta(t)$ is said to be CP-divisible, if it can be realised as the following concatenation for all $t_0 < t' < t$.
\[
\Phi(t,~t_0) = \Phi(t,~t')\circ\Phi(t',~t_0).
\]
where $\Phi(t,~t')$ is a linear CP map. 
\end{definition}
Moreover, if the said dynamics is invertible, then it will correspond to a Lindblad type master equation of the form 

\begin{equation}\label{master1}
\frac{d}{dt}\zeta_t = \mathcal{L}(\zeta_t)= \sum_{i=0}^{n\leq d^2-1} \gamma_i(t)\left[A_i\zeta_tA_i^\dagger-\frac{1}{2}\{A_i^\dagger A_i,~\zeta_t\}\right],
\end{equation}
for a $d$ dimensional system. In such cases the CP divisibility of the dynamics can be determined by the following measure \citep{nm8}
\begin{equation}\label{rhp}
g(t)=\lim_{\delta\rightarrow 0}\frac{||\left(\mathcal{I}\otimes(\mathcal{I}+\delta\mathcal{L})\right)\ket{\phi}\bra{\phi}||_1-1}{\delta},
\end{equation}
with $||A||_1=Tr[\sqrt{A^\dagger A}]$ being the trace norm, $\mathcal{I}$ is identity operation and $\ket{\phi}$ is the maximally entangled state in $d\times d$ dimension. It has been shown \citep{nm8} that, if a quantum operation is divisible, then $\gamma_i(t)\geq 0$ $\forall t~\mbox{and}~i$, giving $g(t)=0$. If the dynamics is indivisible then $\gamma_i(t) < 0$ for at least one $i$ with the consistency condition $\int_0^T \gamma_i(t) dt \geq 0~~\forall T$. 

There is another way to capture indivisible quantum operation by observing information backflow \citep{nm6}. 

\begin{definition}\label{def2}
 Information backflow for indivisible quantum operation is defined by the non-monotonic increment of CP-monotones like trace distance. Mathematically it can be defined by the following condition 
\[
\mathcal{B}(t)= \max_{\zeta_0,\Tilde{\zeta_0}}\frac{d}{dt} D(\zeta_t,\Tilde{\zeta_t}) > 0~ \mbox{for some}~t,
\]
where $D(A,B)=\frac{1}{2}||A-B||_1$ is the trace distance. 
\end{definition}
Note that other CP monotones, in place of trace distance, can also be used in this definition \citep{nm3}. 

\begin{definition}\label{def3}
  The Choi state \citep{choi,jamil} of a given linear map $\mathcal{M}_d(\cdot)$ in $d$-dimension, is given by \[C_{\mathcal{M}_d}=\mathcal{I}\otimes\mathcal{M}_d\left(\ket{\psi_d}\bra{\psi_d}\right),\]
  where $\ket{\psi_d}$ is a maximally entangled state in $d\times d$ dimension. 
\end{definition}
It is also very important to mention that any linear positive map is completely positive, iff the corresponding Choi state is positive \citep{choi,jamil}.

Let us now turn our attention to a specific class of quantum channels and their characteristics. We start with the well known depolarizing channels. These aforementioned channels are unital by nature, having maximally mixed state as their singular fixed point \citep{unital1}. Most famous of the examples is that of qubit Pauli channels given by the master equation 
\begin{equation}\label{master2}
\frac{d}{dt}\rho_t=\sum_{i=1}^3 \Gamma_i(t)\left[\sigma_i\rho_t\sigma_i-\rho_t\right],
\end{equation}
with $\sigma_i$s being the Pauli matrices. Elaborating further, we can say that a completely depolarizing channel for a qubit, taking any arbitrary state $\rho_0$ at initial time, to maximally mixed state, can be represented as 
\[\Lambda_P: \rho\rightarrow \rho' = \sum_{\alpha = 0}^3 \frac{1}{4} \sigma_\alpha \rho_0 \sigma_\alpha,\]
where $\sigma_0=\mathbb{I}$ represents the identity matrix. Note that this map is not in dynamical form, since it does not involve time. We make a dynamical depolarizing map by the following representation 

\begin{equation}\label{depol1}
 \Lambda^{\sigma_0}_t(\rho_0) = p_t\mathcal{I}(\rho_0)+(1-p_t)\Lambda_P(\rho_0),  
\end{equation}

where $0\leq p_t\leq 1$ represents time dependent probability. 

This construction can be generalized for 
a depolarizing map in arbitrary dimensions, allowing for an input state $\zeta_0$ that either matches the channel’s dimension or has a lower dimension but is embedded within the larger-dimensional space. Formally, we represent this as
\begin{equation}\label{depol2}
 \Lambda^{\pi_d}_t(\zeta_0) = p_t\mathcal{I}(\zeta_0)+(1-p_t)\Tilde{\Lambda}_P(\zeta_0),  
\end{equation}
where $\pi_d$ represents the maximally mixed state in d-dimensions and $\Tilde{\Lambda}_P(\cdot)$ represents a completely depolarisng channel in arbitrary dimension, the construction of which is given in appendix \ref{ergoMap-con}.
 These constructions will enable us to study the dynamical properties of such quantum channels, which follows thereafter. But first we will start with the simplest case of qubit dynamics. 

\subsection{\label{sec:level1} Qubit Ergodic Channel}

In the study of stochastic processes, such a process is said to be ergodic, if its statistical traits can be deciphered from a singular sufficiently long realisation of the process. This ergodicity plays a very important role in the study of relaxation to equilibrium. Here we elaborate the preliminary concept of ergodicity from the perspective of quantum information. 

Armed with the aforementioned structure of depolarizing channels, we now define quantum ergodic channels $\Lambda_E(\cdot)$. We start with the simplest case of qubit channels.

The qubit Ergodic channel, defined by $\rho' = \Lambda_E(\rho_0)$, evolves the initial state $\rho_0$ to a predetermined but arbitrary fixed point. 
For a fixed point $\tau = \tau_{00}|0\rangle\langle0| + \tau_{11}|1\rangle\langle 1|$, the channel $\Lambda_E$ follows

\begin{align*}
\Lambda_E: \rho\rightarrow \rho' = \sum_{\alpha=0}^3 A \sigma_\alpha \rho_0 \sigma_\alpha A^\dag ,
\end{align*}

where $A = \sqrt{\frac{\tau}{2}}$. 

Similarly to the depolarizing channel, the dynamical form of the qubit ergodic map is given by 

\begin{align}\label{ergo_map}
 \Lambda_{t}^{\tau}(\rho_0) & =  p_t \mathcal{I}(\rho_0) + (1-p_t) \sum_{\alpha=0}^3 A \sigma_\alpha \rho_0 \sigma_\alpha A^\dag .
\end{align}

The operator $A$ ensures the preservation of a fixed state $\tau$ with a probability of $(1-p_t)$, while the input state is preserved with a probability of $p_{t}$. It is straightforward to verify that $\Lambda_{t}^{\tau}(\tau)=\tau$, i.e. $\tau$ is the fixed point of this dynamics. 

We further investigate the dynamical properties, by deriving the Lindblad type master equation for this map. In Appendix~\ref{sec:appendixLB}, we briefly sketch the method to obtain the Lindblad operator for a dynamical map.
For the particular qubit evolution given in equation \eqref{ergo_map}, its Lindblad master equation can be constructed as

\begin{align}\label{ergo_eq}
\begin{array}{ll}
    \dot{\rho_t} = -\frac{1}{4}\frac{\dot{p_t}}{p_t} \left[ \sigma_3 \rho_t \sigma_3 - \rho_t\right]
    -\tau_{11}\frac{\dot{p_t}}{p_t} \left[ \sigma_- \rho_t \sigma_+ - \frac{1}{2}\{ \sigma_+ \sigma_-, \rho_t\} \right]\\
~~~~~~~~    -\tau_{00}\frac{\dot{p_t}}{p_t} \left[ \sigma_+ \rho_t \sigma_- - \frac{1}{2}\{ \sigma_- \sigma_+, \rho_t\} \right],
\end{array}
\end{align}

where $\sigma_\pm = \frac{1}{2} (\sigma_1 \pm i \sigma_2)$. It is straight forward to check that the state $\tau$ is the fixed point of this dynamics, by putting $\tau$ in place of $\rho_t$. We will then observe that $\tau$ does not change with time. Here it is important to mention that the Lindblad operators are expressed in the basis of the fixed point $\tau$, which we take here to be the computational basis. Even if the fixed point is diagonal in some other arbitrary basis, the corresponding Lindblad operators can be formed in terms of such basis vectors, following the same method; i.e. if $\tau$ is diagonal in some arbitrary basis $\{\ket{\Psi},\ket{\Psi_\perp}\}$, the corresponding Lindblad equation will have the form 
\begin{align}\label{ergo_eq1}
\begin{array}{ll}
    \dot{\rho_t} = -\frac{1}{4}\frac{\dot{p_t}}{p_t} \left[ L_z \rho_t L_z - \rho_t\right]
    -\tau_{11}\frac{\dot{p_t}}{p_t} \left[ L_- \rho_t L_+ - \frac{1}{2}\{ L_+ L_-, \rho_t\} \right]\\
~~~~~~~~    -\tau_{00}\frac{\dot{p_t}}{p_t} \left[ L_+ \rho_t L_- - \frac{1}{2}\{ L_- L_+, \rho_t\} \right],
\end{array}
\end{align}
here the Lindblad operators will be rotated to that basis and represented by $L_z=\ket{\Psi}\bra{\Psi}-\ket{\Psi_\perp}\bra{\Psi_\perp}$, $L_-=\ket{\Psi}\bra{\Psi_\perp}$ and $L_+=\ket{\Psi_\perp}\bra{\Psi}$ respectively, where $\tau=\tau_{00}\ket{\psi}\bra{\psi}+\tau_{11}\ket{\psi_\perp}\bra{\psi_\perp}$.

\subsection{Generalized Ergodic Channel and its Lindblad Construction}

We now generalize our construction to quantum channels in arbitrary dimensions. The qubit Ergodic channel can be extended to \( d \)-dimensional systems. Without loss of generality, this extension assumes a fixed state diagonal in the computational basis, represented as \( \tau_d = \sum_{i} \tilde{\tau}_{ii} |i\rangle \langle i| \).
Building on the generalization of the Pauli channel \citep{nm7}, the action of the generalized ergodic channel can be described by the following equation:

\begin{equation}\label{ergoN1}
   \Lambda_{GE}(\zeta_0)= p_0 (t) \mathcal{I}(\zeta_0)+(1-p_0(t))A_d\Tilde{\Lambda}_p(\zeta_0)A_d^\dagger,
\end{equation}
where $A_d=\sqrt{\frac{\tau_d}{d}}$ and $\Tilde{\Lambda}_p(\cdot)$ is the completely depolarizing channel which takes any given state to identity. It is straightforward to verify that $ \Lambda_{GE}(\tau_d)=\tau_d$; i.e. $\tau_d$ is a fixed point of the dynamics. Now to prove that $\tau_d$ is the sole fixed point, take an arbitrary state $\zeta\neq\tau_d$. So $\Lambda_{GE}(\zeta)=p_0 (t)\zeta+(1-p_0 (t))\tau_d$, which converges to $\tau_d$ iff $p_0 (t)=0$. But for $\zeta$ to be a fixed point of $\Lambda_{GE}$, the condition $\Lambda_{GE}(\zeta)=\zeta$ has to be satisfied for all values of $p_0 (t)$, which is possible only if $\zeta=\tau_d$. Hence it is evident that $\tau_d$ is the only fixed point of $\Lambda_{GE}(\cdot)$.

Based on the previous discussion, we now aim to construct the Lindblad dynamics of an arbitrary dimensional system under the generalized ergodic channel. We outline the dynamics in terms of a master equation that captures the evolution of the system's density matrix under this channel. From the description of the aforementioned $d$-dimenional ergodic map, it is in principle possible to construct the corresponding Lindblad type generator. However, that will involve constructing $d^2\times d^2$  dimensional matrices. Though it is possible to construct such matrices and subsequently the master equation for a given fixed dimension, it is not feasible to find the same for arbitrary dimension. Therefore, we take the approach which is the other way around and infer the possible master equation from the information of the qubit dynamics and then verify that its solution is the aforementioned ergodic map. Based on this inference, we prove the following theorem.

\begin{thm}\label{theorem1}
The dynamics under the generalized ergodic channel $\Lambda_{GE}$ of a dynamical state $\zeta_t=\Lambda_{GE}(\zeta_0)$ in arbitrary dimension, can be expressed by the following sufficient description of a master equation
\begin{align}\label{Nergo}
    \dot{\zeta}_t = \phi_{deph}(\zeta_t) - \frac{\dot{p_t}}{p_t} \sum_{i}\Tilde{\tau}_{ii} \sum_{j\neq i}(|i\rangle\langle j|\zeta_t|j\rangle\langle i| - \frac{1}{2} \{|j\rangle\langle j|,\zeta_t\})
\end{align}
Where the dephasing part $\phi_{deph}(\zeta_t)$ given by
\[\phi_{deph}(\zeta_t) = -\sum_{i, j\neq i} \frac{\Tilde{\tau}_{ii}+\Tilde{\tau}_{jj}}{2} \frac{\dot{p}_t}{p_t} \zeta_{ij} |i\rangle \langle j|\]
accounts for the dephasing of off-diagonal elements of the density matrix $\zeta_t$, where $\zeta_t = \sum_{ij}\zeta_{ij}\ket{i}\bra{j}$ [the subscript 't' in the ``time dependent" components $\zeta_{ij}$ is removed for brevity].
\end{thm}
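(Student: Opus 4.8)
The plan is to follow the strategy announced just before the theorem: rather than constructing the $d^2\times d^2$ generator directly, I would treat \eqref{Nergo} as an ansatz extrapolated from the qubit equation \eqref{ergo_eq} and \emph{verify} that its unique solution reproduces $\Lambda_{GE}$. Since \eqref{Nergo} is time-local and every jump operator $|i\rangle\langle j|$ acts very simply on matrix elements, the cleanest route is to solve it componentwise in the fixed-point basis and show that the solution collapses to the interpolation $\rho(t)=p_t\,\rho_0+(1-p_t)\tau$, which I would then identify with $\Lambda_{GE}(\rho_0)$.

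First I would treat the diagonal (population) sector. Each dissipator $\mathcal{D}[|i\rangle\langle j|](\rho)=|i\rangle\langle j|\rho|j\rangle\langle i|-\tfrac12\{|j\rangle\langle j|,\rho\}$ feeds population from level $j$ into level $i$ with weight $\tau_{ii}$, while $\phi_{deph}$ contributes nothing on the diagonal since it only carries off-diagonal projectors $|i\rangle\langle j|$ with $i\neq j$. Collecting gain and loss terms at a fixed level $k$ gives $\langle k|\dot\rho|k\rangle=-\frac{\dot p_t}{p_t}\big(\tau_{kk}\sum_{j\neq k}\rho_{jj}-\rho_{kk}\sum_{i\neq k}\tau_{ii}\big)$, and invoking the two normalizations $\sum_i\tau_{ii}=1$ and $\operatorname{Tr}\rho=1$ collapses this to the single linear equation $\dot\rho_{kk}=-\frac{\dot p_t}{p_t}(\tau_{kk}-\rho_{kk})$. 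Integrating with the factor $1/p_t$ (and $p_0=1$) yields $\rho_{kk}(t)=p_t\rho_{kk}(0)+(1-p_t)\tau_{kk}$, exactly the diagonal of the target map.

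The delicate part, and the step I expect to be the main obstacle, is the off-diagonal (coherence) sector. Here the Lindblad gain terms $|i\rangle\langle j|\rho|j\rangle\langle i|=\rho_{jj}|i\rangle\langle i|$ are purely diagonal and drop out, so each coherence $\rho_{kl}$ with $k\neq l$ is driven only by the anticommutator halves of every jump operator acting out of levels $k$ and $l$, together with the dedicated dephasing term. Summing the anticommutator pieces gives $\frac{\dot p_t}{2p_t}\big[(1-\tau_{kk})+(1-\tau_{ll})\big]\rho_{kl}$, and one must check that $\phi_{deph}$ supplies precisely the complementary $\frac{\dot p_t}{2p_t}(\tau_{kk}+\tau_{ll})\rho_{kl}$, so that the two combine into the clean coherence law $\dot\rho_{kl}=\frac{\dot p_t}{p_t}\rho_{kl}$, i.e. $\rho_{kl}(t)=p_t\rho_{kl}(0)$. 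This is exactly the point where the coefficient $\tfrac{\tau_{ii}+\tau_{jj}}{2}$ in $\phi_{deph}$ is forced, and where careful bookkeeping of the double sum $\sum_i\sum_{j\neq i}\tau_{ii}$ (and the correct overall sign, which I would cross-check against the $\sigma_3$-dephasing normalization of \eqref{ergo_eq}) is essential; a stray factor here would spoil the cancellation and leave a spurious $\tau$-dependent damping of the coherences that is absent from $\Lambda_{GE}$.

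Combining the two sectors gives $\rho(t)=p_t\rho_0+(1-p_t)\tau$, and it remains to identify this with $\Lambda_{GE}(\rho_0)$. For that I would invoke the completeness of the mutual unbiased bases defining the $U_\alpha$, namely that the MUB-averaged twirl $\frac{1}{d-1}\sum_\alpha U_\alpha(\cdot)U_\alpha^\dagger$ sends every input to a multiple of the identity, which the dressing $A=\sqrt{\tau/d}$ then converts into the fixed point $\tau$; this reproduces the interpolating form with $p_t=p_0(t)$. As an independent consistency check I would substitute $\rho=\tau$ into \eqref{Nergo} and confirm that both sectors vanish, re-establishing $\Lambda_{GE}[\tau]=\tau$. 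Together these steps show that \eqref{Nergo} is a sufficient time-local generator for the generalized ergodic dynamics.
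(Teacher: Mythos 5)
Your proposal matches the paper's own proof essentially step for step: both arguments verify \eqref{Nergo} componentwise in the fixed-point eigenbasis, splitting into the diagonal sector (where the gain/loss terms collapse via $\sum_i\tau_{ii}=1$ and $\operatorname{Tr}\rho=1$) and the coherence sector (where the gain terms drop out and the anticommutator pieces combine with $\phi_{deph}$), reducing everything to $\dot{\rho}_{ij}=\frac{\dot{p}_t}{p_t}(\rho_{ij}-\tau_{ij})$ and integrating with $p_0=1$ to recover $p_t\rho_0+(1-p_t)\tau$. Your flagged sign concern is well placed: the cancellation in the coherence sector indeed forces $\phi_{deph}$ to contribute $+\frac{\dot{p}_t}{2p_t}(\tau_{ii}+\tau_{jj})\rho_{ij}$, which is what the paper's proof implicitly uses (and what the qubit equation \eqref{ergo_eq} yields), even though the theorem's displayed definition of $\phi_{deph}$ carries the opposite sign.
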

\begin{proof}
    The proof of this statement requires analysing the master equation given in \eqref{Nergo}.
We start with showing the given master equation for a d-dimensional system can be rearranged to the following form
\[\dot{\zeta}_{ij} = \frac{\dot{p}_t}{p_t} (\zeta_{ij}-\Tilde{\tau}_{ij}),\]
for each $ij$-th element of the density matrix 
\begin{align*}
    \dot{\zeta_{ij}} &= \langle i| \dot{\zeta}_t|j\rangle = \langle i|\phi_{deph}(\zeta_t)|j\rangle \\
    &- \frac{\dot{p_t}}{p_t} \sum_{k}\Tilde{\tau}_{kk} \langle i|\sum_{l\neq k}(|k\rangle\langle l|\zeta_t|l\rangle\langle k| - \frac{1}{2} \{|l\rangle\langle l|,\zeta_t\})|j\rangle .\\
\end{align*}
Let us now consider the diagonal and off-diagonal entries of the density matrix separately.
For the diagonal entries we have 
\begin{equation*}
\begin{array}{ll}
   ~~~~~~~~~~~~ \dot{\zeta_{ii}}=\bra{i}\zeta\ket{i} \\
   \\
    = \bra{i}\phi_{deph}(\zeta_t)\ket{i}- \frac{\dot{p_t}}{p_t}\sum_{k, l\neq k}\Tilde{\tau}_{kk} \langle i|(|k\rangle\langle l|\zeta_t|l\rangle\langle k| - \frac{1}{2} \{|l\rangle\langle l|,\zeta_t\})|i\rangle \\
    \\
    = - \frac{\dot{p_t}}{p_t}\sum_{k, l\neq k}\Tilde{\tau}_{kk} \langle i|(|k\rangle\langle l|\zeta_t|l\rangle\langle k| - \frac{1}{2} \{|l\rangle\langle l|,\zeta_t\})|i\rangle \\
    \\(\mbox{The dephasing will not contribute for the diagonal terms})\\
    \\
    =- \frac{\dot{p_t}}{p_t}\sum_{k, l\neq k}\Tilde{\tau}_{kk}\left(|\langle i|k\rangle|^2 \bra{l}\zeta_t\ket{l}-\frac{1}{2}\bra{i}l\rangle\bra{l}\zeta_t\ket{i}-\frac{1}{2}\bra{l}i\rangle\bra{i}\zeta_t\ket{l}\right)\\
    \\
    =- \frac{\dot{p_t}}{p_t}\sum_{k}\Tilde{\tau}_{kk}\left(|\langle i|k\rangle|^2 - \zeta_{ii} - \left(|\langle i|k\rangle|^2 \zeta_{kk} - \frac{1}{2} \langle i | k \rangle \zeta_{ki} -\frac{1}{2} \langle i | k \rangle \zeta_{ki}\right)\right) \\
    = \frac{\dot{p_t}}{p_t} \left(\zeta_{ii}-\Tilde{\tau}_{ii}\right)
\end{array}
\end{equation*}
For off-diagonal terms,
\begin{align*}
    \dot{\zeta}_{ij,(i \neq j)} =&  \langle i|\phi_{deph}(\zeta_t)|j\rangle- \frac{\dot{p_t}}{p_t} \sum_{k=0}^{d-1}\Tilde{\tau}_{kk} \sum_{l\neq k}\bigg(\langle i|k\rangle\langle l|\zeta_t|l\rangle\langle k|j\rangle \\
    &- \frac{1}{2} (\langle i|l\rangle\langle l|\zeta_t|j\rangle + \langle i|\zeta_t|l\rangle \langle l|j\rangle) \bigg) \\
    =&  \langle i|\phi_{deph}(\zeta_t)|j\rangle- \frac{\dot{p_t}}{p_t} \sum_{k=0}^{d-1}\Tilde{\tau}_{kk} \bigg(0 - \frac{1}{2} \bigg(\langle i|\zeta_t|j\rangle + \langle i|\zeta_t|j\rangle \\
    & - \langle i|k\rangle\langle k|\zeta_t|j\rangle - \langle k|j\rangle\langle i|\zeta_t|k\rangle \bigg) \bigg) \\
    =& \langle i|\phi_{deph}(\zeta_t)|j\rangle + \frac{\dot{p_t}}{p_t} [\zeta_{ij} - \frac{1}{2}(\Tilde{\tau}_{ii} + \Tilde{\tau}_{jj}) \zeta_{ij}] \\
    =& \frac{\dot{p_t}}{p_t} \zeta_{ij}= \frac{\dot{p_t}}{p_t} (\zeta_{ij} - \Tilde{\tau}_{ij})~~(\mbox{since}~~ \Tilde{\tau}_{ij, (i \neq j)} =0).
\end{align*}
Therefore from the analysis of both diagonal and off-diagonal terms it is shown that the $ij$-th components of the density matrix obeys the equation $\dot{\zeta}_{ij} = \frac{\dot{p}_t}{p_t} (\zeta_{ij}-\Tilde{\tau}_{ij})$. If we now solve the differential equation considering the initial probability $p_0=1$, then we find the original ergodic map $p_t\zeta_0+(1-p_t)\tau_d$ given in \eqref{ergo_map}. This completes the proof.
\end{proof}

\subsection{Non-Markovianity Analysis of Ergodic Channels} \label{SecD}

In this subsection, we are now going to analyse the aspect of non-Markovianity of the ergodic channels. We will do so by considering both the phenomena of divisibility and information backflow of a given channel. For a clear understanding, we start with the qubit ergodic channel as an example expressed by the master equation given in \eqref{ergo_eq}. Following the divisibility based non-Markovianity measure, which we can call the RHP (Rivas-Huelga-Plenio) measure given in \eqref{rhp}, we can explicitly calculate the non-Markovianity of the aforementioned qubit channel given in \eqref{ergo_eq}. The Choi state (Def. \ref{def3}) for this particular dynamics, over which this measure will be calculated  is given by 
\begin{equation}
\mathcal{C}_\delta=\mathcal{I}\otimes(\mathcal{I}
+\delta\mathcal{L})\ket{\phi}\bra{\phi},
\end{equation} 
where $\mathcal{L}$ is the Lindblad type superoperator given in \eqref{ergo_eq}, $\delta$ is an infinitesimally small time period and $\ket{\phi}=\frac{1}{\sqrt{2}}(\ket{00}+\ket{11})$ is the maximally entangled states in $2\times 2$ dimension. Based on these, the RHP measure can be calculated as 
\[g(t)= \frac{3}{2}\left|-\frac{\dot{p}_t}{p_t}\right|,\]
when $\dot{p}_t$ is positive. When the same is negative, the dynamics is divisible and hence Markovian, giving $g(t)=0$ for all $t$. Note that in our calculation, we have neglected $2^{nd}$ and higher order terms containing $\delta$, since they are infinitesimally small. Here it is important to note that the non-Markovianity measure does not depend on the fixed point $\tau$. Therefore irrespective of whether the operation is depolarizing with $\tau = \mathbb{I}/2$, amplitude damping with 
$\tau = \ket{0}\bra{0}$ or any other ergodic dynamics with corresponding arbitrary fixed point $\tau$, the non-Markovianity will solely depend on the time evolution characteristics of the dynamic probability $p_t$. As an example, we consider an exponentially decaying dynamic probability $p_t = e^{-\gamma t}$, where $\gamma$ is a constant factor. By analysing the map given in \eqref{ergo_map}, we can directly deduce that given an exponentially decaying dynamical probability, any given initial state will monotonically converge to the given fixed point, as expected for an evolution that is Markovian in nature. It is also strightforward to deduce that such exponentially decaying dynamic probability will indicate a divisible and hence Markovian evolution, since $\dot{p}_t$ is always negative. On the otherhand, if we consider $p_t=\cos^2 (\omega t)$, the RHP measure gives us $g(t)=2\omega\left|\tan\omega t\right|$, where $ t = \frac{(2n+1)\pi+\theta}{2\omega}$ with $0<\theta<\pi$ and $n=(0,1,2,...)$. Note that the aforementioned results on RHP measure are restricted to qubits. However, in order to demonstrate a more general result on non-Markovianity analysis of arbitrary dimensional ergodic evolutions, we exploit another prominent measure of non-Markovianity in the following discussion.  

The other established quantification of non-Markovianity is done through quantifying the information backflow, as described by Definition \ref{def2}, which we abbreviate as BLP (Breuer-Laine-Piilo) measure. To calculate this measure, we consider two arbitrary qubits $\rho_t=\frac{1}{2}\left(\mathbb{I}+\Vec{\chi(t)}.\Vec{\sigma}\right)$ and $\mu_t=\frac{1}{2}\left(\mathbb{I}+\Vec{\Pi(t)}.\Vec{\sigma}\right)$, with $\Vec{\sigma}=\{\sigma_1,\sigma_2,\sigma_3\}$, $\Vec{\chi(t)}=\{\chi_1(t),\chi_2(t),\chi_3(t)\}$, $\Vec{\Pi(t)}=\{\Pi_1(t),\Pi_2(t),\Pi_3(t)\}$ obeying the well known constraints of positive semi-definite matrices, with unit trace. The time derivative of trace distance between $\rho_1(t)$ and $\rho_2(t)$ can therefore be calculated as  
\[
\begin{array}{ll}
~\frac{\dot{p}_t}{p_t}\sqrt{\left(\chi_1(t)-\Pi_1(t)\right)^2+\left(\chi_2(t)-\Pi_2(t)\right)^2+\left(\chi_3(t)-\Pi_3(t)\right)^2},\\
\\
=\dot{p}_t\sqrt{\left(\chi_1(0)-\Pi_1(0)\right)^2+\left(\chi_2(0)-\Pi_2(0)\right)^2+\left(\chi_3(0)-\Pi_3(0)\right)^2}.
\end{array}
\]
Here we use the fact that $\rho_i(t)=p_t\rho_i(0)+(1-p_t)\tau$. Therefore the BLP measure of non-Markovianity, which is the maximum of the previous quantity over the initial states, can be evaluated as 
\[
\mathcal{B} (t) = \frac{dp_t}{dt},
\]
when $\dot{p}_t>0$, giving rise to information backflow. It is interesting to note here that both RHP and BLP measures are equivalent for the case of qubit ergodic evolution. From their expressions we observe that $\mathcal{B}(t)$ will give non zero value indicating information backflow, at the same instant when $g(t)$ is non zero, when $\dot{p}_t > 0$. 

Here, we explicitly demonstrate that the results regarding BLP measure established for qubit ergodic channels can be extended to channels of arbitrary dimensions. In case of divisibility, we know from previous literature \citep{nm8} that if one or more of the Lindblad coefficients are negative at any time of the dynamics, it is not CP-divisible. Following this, we can see from the general master equation given in \eqref{Nergo}, that the evolution is not CP-divisible, when $\dot{p}_t > 0$. For the case of information backflow, for a general ergodic evolution of d-dimensional system given by $\zeta_0 [\mbox{or}~ \Tilde{\zeta_0}]\rightarrow p_t\zeta_0 [\mbox{or}~ \Tilde{\zeta_0}]+(1-p_t)\tau$, the trace distance between two arbitrary states $\zeta_t$ and $\Tilde{\zeta_t}$ at any arbitrary time $t$, can be expressed as 
\[ D\left( \zeta_t-\Tilde{\zeta_t}\right) 
= p_t D\left( \zeta_0-\Tilde{\zeta_0}\right). \]
From the expression, using the fact that the trace distance is upper bounded by unity, the BLP measure can be derived to be $\mathcal{B}_d(t)=\dot{p}_t$. Hence information backflow will always occur with breaking of CP-divisibility, when $\dot{p}_t > 0$.

Note that the aspect of divisibility and non-Markovianity is much more rigorous \cite{wolf2008dividing} and hence needed further scrutiny for the ergodic channels introduced here. In appendix \ref{sec:appendixDiv} we present a detailed analysis of divisible ergodic channels.

\section{Ergodic Channels and Their Broader Physical Implications}

In the previous sub-Section \ref{SecD}, we analyzed the non-Markovian behavior of ergodic channels by virtue of divisibility \citep{nm8} and information backflow \citep{nm6} criteria. Extending this exploration, we now turn to the $\alpha$- relative Rényi entropies, a family of pseudo distance-based measures, to deepen our understanding of the physical implications of ergodic channels. The $\alpha$- relative Rényi entropies \citep{datta}, defined for a pair of quantum states \((\zeta_1, \zeta_2)\), given by
\[
D_\alpha \left(\zeta_1||\zeta_2\right) = \frac{1}{\alpha - 1} \ln \left( \zeta_1^\alpha \zeta_2^{1 - \alpha} \right),
\]
where \(\alpha \in [0, \infty)\). Although this measure does not satisfy all formal mathematical criteria for a distance measure, it has significant physical relevance across various fields, including the resource theory of thermodynamics \cite{brandao}, uncertainty relations \cite{uncer1}, and other branches of quantum information theory \citep{gilad1,gilad2,Holevo_2012}. For ergodic channels, if \(\tau_d\) represents the $d-$ dimensional fixed point, the $\alpha$- relative Rényi entropies \(D_\alpha(\zeta_1 || \tau_d)\) become monotonic for any given ergodic channel, where $\zeta_1$ represents an arbitrary initial state. This is straightforward to prove, by using the monotonicity of the $\alpha$- relative Rényi entropies under completely positive trace preserving maps \cite{lieb} and the fact that $\tau_d$ is the fixed point under corresponding ergodic evolution. To associate greater physical implications we now turn to more realistic ergodic channels.

\subsection{Thermal Ergodic Channels}

An ergodic channel is referred to as a thermal ergodic channel when its fixed point \(\tau_d\) assumes the Gibbsian form \(\tau_\beta = \frac{\exp(-\beta H)}{\operatorname{Tr}[\exp(-\beta H)]}\), where \(\beta = 1/k_B T\) is the inverse temperature. In this context, the \(\alpha\)-relative Rényi entropies serve as generalized free energy differences \cite{Petz1986, Donal1987, Horodecki2013}. Specifically,
\[
F_\alpha(\zeta)-F_\alpha(\tau_\beta) = \frac{1}{\beta}D_\alpha(\zeta||\tau_\beta),
\]
with \(\alpha \in [0, \infty)\), gives rise to an infinite series of second laws of thermodynamics within the finite quantum regime \cite{Brandao2013,brandao,Horodecki2013,Aberg2013}. In the asymptotic (classical) limit, these generalized free energies turn out to be a single quantity, and one recovers the standard second laws of thermodynamics.

More rigorously, in the limit \(\alpha \to 1\), the $\alpha$- relative Rényi entropies reduce to the usual quantum relative entropy,
\[
D(\zeta||\tau_\beta) = \tr\left[\zeta(\ln \zeta - \ln \tau_\beta)\right],
\]
which gives the usual free energy differences as,
\begin{align}\label{thermodynamic work}
    W_{th}(\zeta) &= F(\zeta) - F(\tau_\beta) = \tr(\zeta H) - T S(\zeta) \nonumber\\
    &= \frac{1}{\beta}D(\zeta||\tau_\beta).
\end{align}
This quantity, \(W_{th}(\zeta) = F(\zeta) - F(\tau_\beta)\), represents the maximum extractable work from the system \(\zeta\) when in contact with a thermal bath at inverse temperature \(\beta\), achievable in the asymptotic limit \cite{Brandao2013}.

In contrast, at \(\alpha \to \infty\), the $\alpha$- relative Rényi entropies capture the max-relative entropy, with
\[
W_{\infty}(\zeta) = \frac{1}{\beta}D_{\infty}(\zeta || \tau_\beta) = \frac{1}{\beta}\inf\{\gamma : \zeta \leq 2^\gamma \tau_\beta\},
\]
which denotes the single-shot work extractable from the state \(\zeta\). In the the end, for \(\alpha \to 0\),
\[
W_0(\zeta) = \frac{1}{\beta}D_0(\zeta || \tau_\beta) = -\frac{1}{\beta}\ln \left[\tr(\Pi_{\zeta} \tau_\beta)\right],
\]
where \(\Pi_{\zeta}\) is the projection onto the support of \(\zeta\), quantifying the thermodynamic formation cost of the state \(\zeta\). These quantities follow the ordering \(W_0(\zeta) \leq W_{th}(\zeta) \leq W_{\infty}(\zeta)\), with both inequalities converging to equalities in the asymptotic limit.

\begin{figure*}[ht]
    \centering
    \begin{subfigure}[b]{0.48\textwidth}
        \centering
        \includegraphics[width=\textwidth]{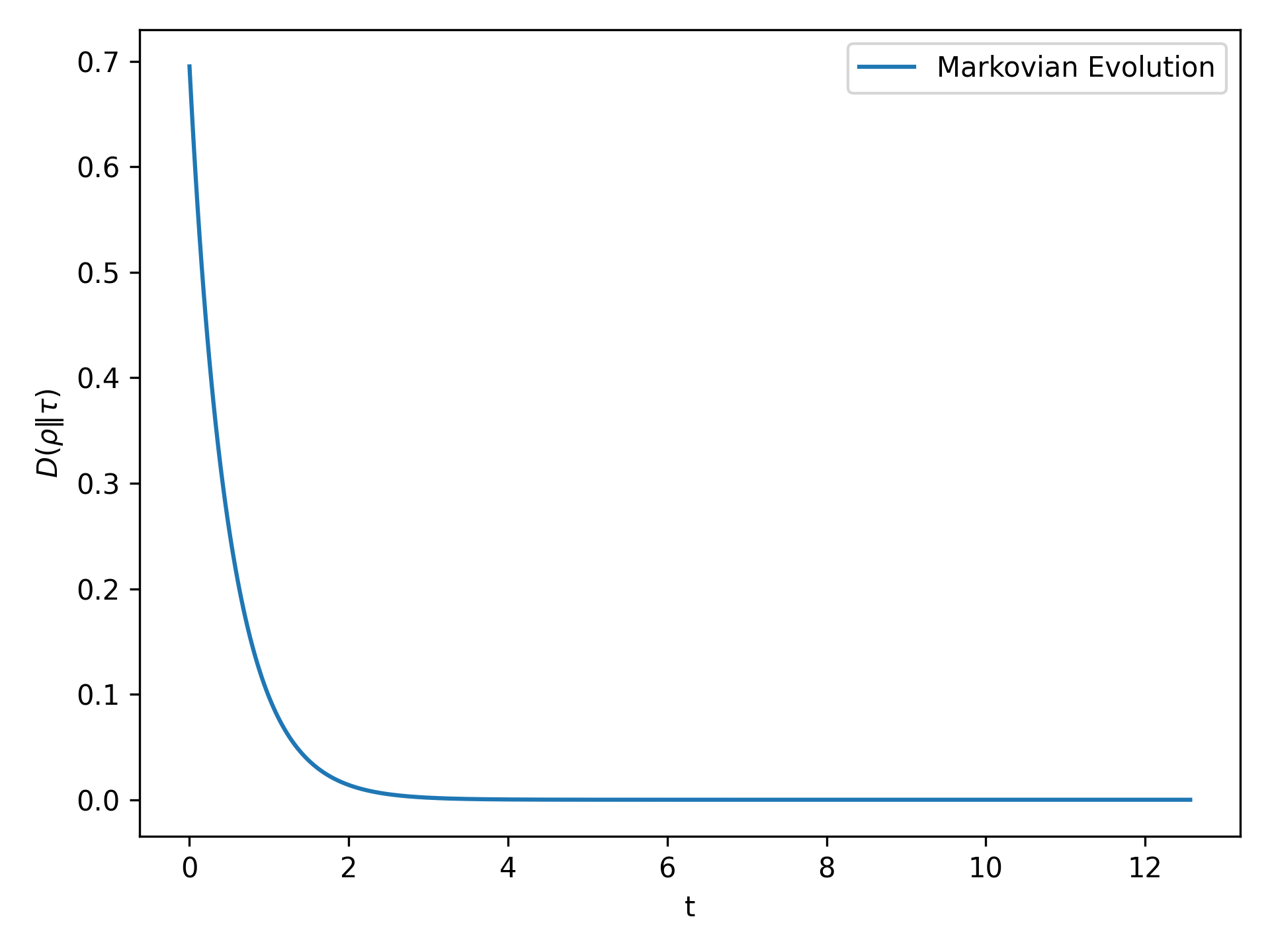}
        \caption{}
        \label{fig:qre_exp}
    \end{subfigure}
    \hfill
    \begin{subfigure}[b]{0.48\textwidth}
        \centering
        \includegraphics[width=\textwidth]{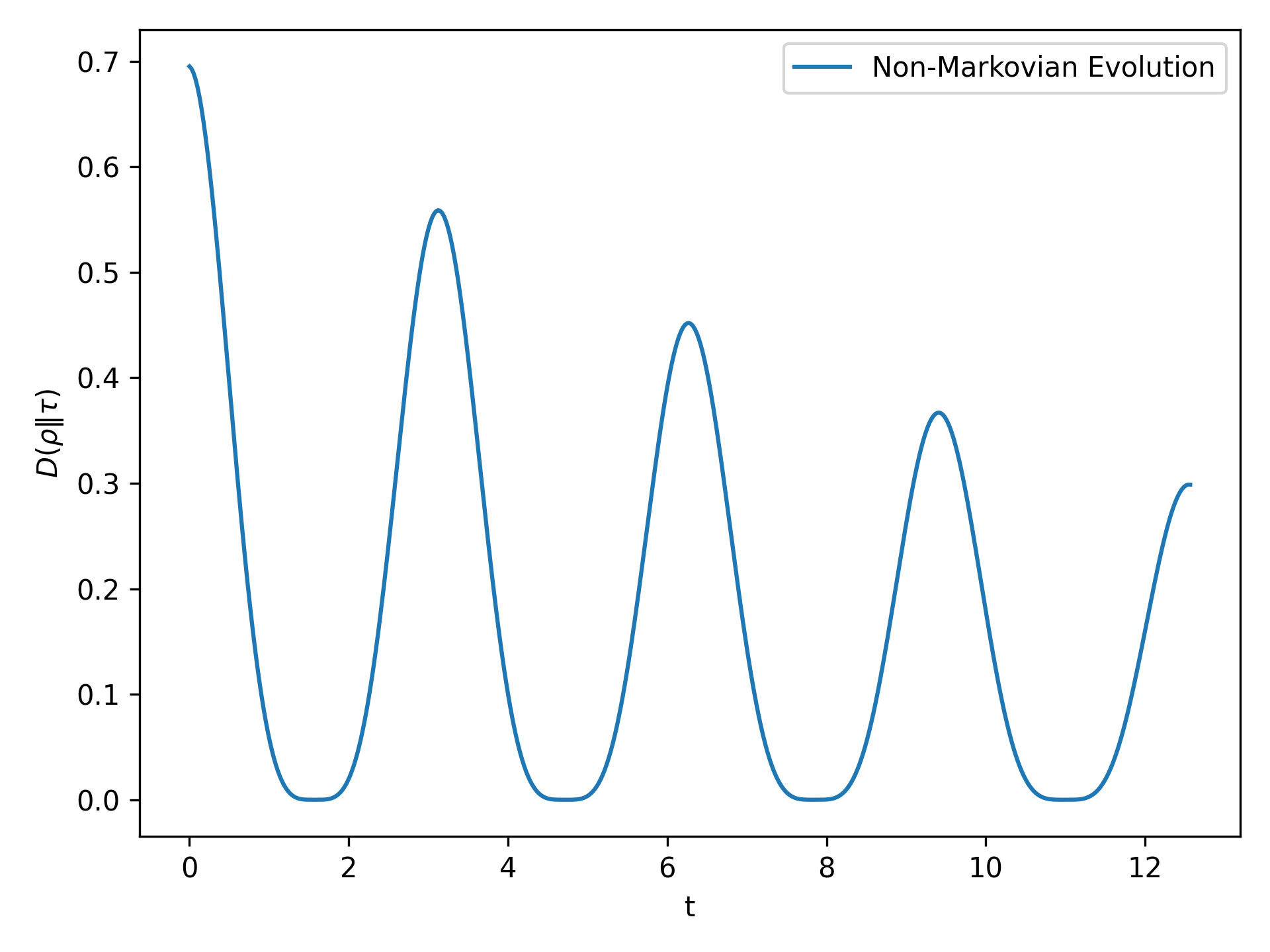}
        \caption{}
        \label{fig:qre_cos}
    \end{subfigure}
    \caption{\it Change in Relative Entropy under Markovian and Non-Markovian Evolution-- The initial state is represented in the Bloch sphere as \( \vec{r}_{\text{init}} = \left(\frac{1}{2}, 0, \frac{1}{2}\right) \), while the thermal state is given by \( \vec{r}_{\tau} = (0, 0, \frac{1}{2}) \). The thermal state is defined in the computational basis \(\{\ket{0}, \ket{1}\}\), with temperature \(T = 1\) and the corresponding Hamiltonian \(H_s = e \ket{1}\bra{1}\), where the energy gap \(e = 1.099\) is expressed in natural units (\(\hbar = 1\), \(k_B = 1\)). Fig. 1(a) illustrates the Markovian dynamics, showing a monotonic decrease in relative entropy (or thermodynamically extractable work) over time due to the continuous dissipation of resources into the bath. In contrast, Fig. 1(b) depicts non-Markovian dynamics, where the environment facilitates backflow of previously lost resources, retaining memory during the evolution and eventually leading to saturation.}
    \label{RelEnt}
\end{figure*}

Since we are considering thermal channels, all discussed quantities exhibit monotonic behavior under Markovian dynamics, with their resource content progressively decreasing as the system moves toward equilibrium. However, as illustrated in Fig. (\ref{RelEnt}), where we plot the time evolution of relative entropy (Eq. \ref{thermodynamic work}), non-Markovian dynamics reveal a contrasting non-monotonic behavior for usual relative entropy, underscoring the influence of memory effects on resource retention in terms of thermodynamically extractable work.

When extending our focus from thermal channels to arbitrary ergodic channels with fixed points that are non-thermal states, significant distinctions arise. In these cases, the usual thermodynamic interpretations of extractable work or free energy, typically associated with a thermal state, cannot be directly considered. This is because free energy and extractable work are defined relative to a thermal equilibrium state, which is not well-defined for a non-thermal fixed point. While the notion of distance, such as relative entropy, still holds, it no longer directly corresponds to extractable work.Nonetheless, the \(\alpha\)-relative Rényi entropies retain their monotonicity, provided the distance measures are defined with respect to the fixed point of the Markovian ergodic channel.

In the next section, we explore how a thermodynamic interpretation can still be assigned to ergodic channels whose fixed points are passive states. For systems with dimensions greater than two, the set of passive states extends beyond the set of thermal states. Unlike thermal states, general passive states lack the concept of a unique temperature since their probability distributions across energy eigenstates deviate from the Gibbsian distribution, marking them as non-equilibrium states \cite{Skrzypczyk2015}. Nevertheless, passive states retain thermodynamic relevance: under isolated (unitary) evolution, they represent the lowest energetic configurations from which no work can be extracted, much like thermal states. We will examine in detail how the work extractable from an isolated quantum system, known as ergotropy, exhibits monotonicity under ergodic channels when passive states (need not to be thermal states) serve as the fixed point.

\subsection{\label{sec:level3} Passive State Preserving Ergodic Channels
}

Ergotropy quantifies how much optimal work one is able to extract from an isolated quantum system (battery) \cite{Pusz1978, Lenard1978, Allahverdyan2004, Skrzypczyk2015, Salvia2020}. As the system is isolated, its dynamics are governed solely by unitary operations and extractable work from an arbitrary system \(\zeta\) reads as
\begin{align}
    W_e(\zeta) = E(\zeta) - \min_U E(U \zeta U^\dagger),
\end{align}
where \(E(X) := \tr(XH)\) denotes the average energy of the system, and \(H = \sum_i \epsilon_i |i\rangle \langle i|\) represents the system Hamiltonian with \(\epsilon_i \leq \epsilon_{i+1}\). Here, \(\epsilon_i\) are the energy eigenvalues with the corresponding energy eigenvectors \(|i\rangle\). The minimization of unitary operations is required to evolve the system towards the lowest energetic state. For a given system \(\zeta\) with the spectral decomposition \(\{\lambda_i\}\) arranged in non-increasing order, the state with the lowest energy, called the passive state, is expressed as $\zeta_p = \sum_i \lambda_i |i\rangle\langle i|.$ The final state is called {\it passive} state as it yields zero ergotropy under the unitary evolution.
\par
Recently, ergotropy has attracted significant attention for its role in characterizing various features of quantum correlations \cite{Mukherjee2016, Francica2017,Alimuddin2019,Francica2020,Sen2021,Touil2022,Francica2022,Puliyil2022,Tirone2023,Yang2024,Sun2024} and its important implications for quantum batteries \cite{Alicki2013,Andolina2019,Alimuddin2020S,Cruz2022,Joshi2022,Yang2023,Dominik2023,Joshi2024,Campaioli2023,Halder2024}. As a thermodynamic resource, ergotropy has led to the development of a resource-theoretic framework \cite{Alimuddin2020I,Singh2021,Swati2023}, though defining a complete set of free operations within this theory remains a challenge. Notably, a specific class of ergodic channels has been identified where ergotropy exhibits monotonicity. However, not all ergodic channels share this property, even when their fixed points correspond to passive states. A key example of this is the Gaussian-preserving channel, which, despite being a thermal-state-preserving ergodic channel, does not exhibit the monotonicity of ergotropy \cite{Faist2015}. This is because such a channel can induce coherence in a diagonal passive state, transforming it into an active state, and thereby violating monotonicity. Moreover, it is well-established that ergotropy, thermodynamic work, entropy, and single-shot extractable work are independent quantities, underscoring ergotropy as a genuine resource \cite{Alimuddin2020I, Allahverdyan2004}.

One of the nontrivial challenges is linking ergotropy with a specific distance measure, unlike thermal work \(W_{th}\). This remains an open question, partly due to the complex transformation of ergotropy during state evolution. When a state \(\zeta\) evolves to \textcolor{red}{\(\chi\)}, the corresponding passive states \(\zeta_p\) and \textcolor{red}{\(\chi_p\)} also change (both of these changes could be arbitrarily connected), complicating the understanding of how ergotropy behaves under these transformations. This challenge can be further understood through the framework introduced in \cite{Francica2020}, where the total ergotropy of a state \(\rho\) is partitioned as \(W_e(\rho) = \mathcal{E}_c(\rho) + \mathcal{E}_i(\rho)\). Here, \(\mathcal{E}_c(\rho)\), referred to as the coherence contribution, is associated with distance measures and is a function of the coherence in the state and relative entropy. On the other hand, \(\mathcal{E}_i(\rho) = E(\rho) - \max_{U_i} E(U_i \rho U_i^\dagger)\) represents the incoherent contribution, where \(U_i\) denotes unitaries that preserve the coherence of the state. Unlike \(\mathcal{E}_c(\rho)\), the term \(\mathcal{E}_i(\rho)\) involves the Hamiltonian of the system and considers observational quantities, making it challenging to associate with any specific distance measure. Further insights into this issue are provided in \cite{Lobejko2021,Sparaciari2017}, where the authors establish relationships between ergotropy, free energy, and distance measures under finite and asymptotic scenarios. These analyses involve auxiliary systems such as a bath and a work reservoir, constrained by natural conditions on the allowed unitary transformations, including (i) energy conservation and (ii) translational invariance. While this framework broadens the conceptual understanding of ergotropy, it remains nontrivial to identify a comprehensive set of channels that satisfy ergotropy monotonicity or to pinpoint a universal distance function directly associated with the ergotropy of an arbitrary system—particularly without any auxiliary assistance. The key message here is that if ergotropy could be rigorously connected to a specific distance measure, it would significantly simplify Markovian analyses. Specifically, monotonicity under completely positive (CP) maps could be directly leveraged, streamlining the evaluation of ergotropy evolution in quantum channels. In the absence of such a direct association, however, the problem remains both intricate and crucial for advancing our understanding of ergotropy's role in quantum thermodynamics.

 In this article, we primarily explore the dynamics of ergotropy for a known class of passive state preserving ergodic channels and demonstrate how information backflow, due to non-Markovianity, affects the monotonic nature of ergotropy in the system. This analysis provides new insights into the interplay between ergotropy and ergodic channels, particularly in non-Markovian environments. We start with the following Lemmas, which characterizes the class of quantum channels that demonstrate the monotonicity of the ergotropic quantity.
\begin{Lemma}\cite{Alimuddin2020I}\label{lemma1}
Ergotropy is a monotonic quantity for any arbitrary energy-preserving quantum channels.
\end{Lemma}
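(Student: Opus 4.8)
The plan is to exploit the decomposition of ergotropy into an energy term and a purely spectral term, and then use the structure of energy-preserving operations. Writing $W_e(\rho)=E(\rho)-P(\rho)$, where $P(\rho):=\min_U E(U\rho U^\dagger)=\tr(\rho_p H)$ is the passive energy, the first observation is that $P(\rho)$ depends only on the spectrum of $\rho$ (it equals $\sum_i \lambda_i^\downarrow \epsilon_i^\uparrow$ by the rearrangement inequality, pairing the largest eigenvalues with the lowest energies). Since an energy-preserving channel $\Lambda$ leaves $E(\rho)=\tr(\rho H)$ unchanged, proving $W_e(\Lambda(\rho))\le W_e(\rho)$ reduces to showing that the passive energy does not decrease, i.e.\ $P(\Lambda(\rho))\ge P(\rho)$.

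For the main step I would establish two elementary properties of $W_e$. First, $P$ is concave in $\rho$, being a minimum of the linear functionals $\rho\mapsto \tr(\rho\, U^\dagger H U)$ over all unitaries $U$; hence $W_e=E-P$ is \emph{convex}. Second, $W_e$ is invariant under any energy-conserving unitary $V$ (one with $[V,H]=0$): such a $V$ leaves both $E$ (since $\tr(V\rho V^\dagger H)=\tr(\rho H)$) and the spectrum, and therefore $P$, unchanged, so $W_e(V\rho V^\dagger)=W_e(\rho)$. Using the characterization of energy-preserving operations as convex mixtures of energy-conserving unitaries, $\Lambda(\rho)=\sum_k p_k U_k\rho U_k^\dagger$ with $[U_k,H]=0$, convexity then gives $W_e(\Lambda(\rho))\le \sum_k p_k W_e(U_k\rho U_k^\dagger)=\sum_k p_k W_e(\rho)=W_e(\rho)$, which is the claim.

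An equivalent and more structural way to see the reduction is through majorization: energy-preserving (mixing) operations satisfy $\Lambda(\rho)\prec\rho$, and $P$, being symmetric and concave in the eigenvalues, is Schur-concave, so $P(\Lambda(\rho))\ge P(\rho)$ follows at once. I would keep this majorization picture as the conceptual backbone even if I present the convexity computation as the formal argument.

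The step I expect to be the main obstacle is isolating exactly which property of ``energy-preserving'' is being invoked. Mere preservation of the average energy, $\tr(\Lambda(\rho)H)=\tr(\rho H)$ for all $\rho$, is \emph{not} sufficient: a channel that conserves $\langle H\rangle$ but lowers the entropy (sharpening the spectrum) can enlarge the energy-to-passive-energy gap and thereby \emph{increase} ergotropy. The proof must therefore explicitly use the mixing/unital character of the free operations, through their representation as mixtures of energy-conserving unitaries, or at least through unitality combined with energy preservation, since it is precisely this ingredient that supplies either the convexity bound or the majorization relation above.
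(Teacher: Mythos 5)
Your conceptual backbone coincides with the paper's proof. The paper argues: energy-preserving channels are unital (a cited fact); unital channels satisfy $\Lambda(\rho)\prec\rho$ (Uhlmann's theorem); majorization implies the passive energies are ordered, $E\bigl((\Lambda(\rho))_p\bigr)\ge E(\rho_p)$, because passive energy is Schur-concave; and energy preservation then gives $W_e(\Lambda(\rho))\le W_e(\rho)$. Your ``equivalent and more structural way'' paragraph is precisely this chain, and your closing caveat --- that preservation of $\langle H\rangle$ alone cannot suffice, and that the unital/mixing character is the load-bearing ingredient --- is correct; it is exactly the step the paper delegates to its citation. The caveat is not academic: for the degenerate Hamiltonian $H=\mathrm{diag}(0,1,1)$, the channel with Kraus operators $|0\rangle\langle 0|,\ |1\rangle\langle 1|,\ |1\rangle\langle 2|$ preserves $\langle H\rangle$ on every state (indeed every Kraus operator commutes with $H$), yet it is not unital and maps the state $\tfrac12\left(|1\rangle\langle 1|+|2\rangle\langle 2|\right)$, which has ergotropy $1/2$, to $|1\rangle\langle 1|$, which has ergotropy $1$.

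The gap lies in what you present as the formal argument. You invoke ``the characterization of energy-preserving operations as convex mixtures of energy-conserving unitaries,'' but no such characterization exists at the generality of the lemma: for $d\ge 3$ unital channels need not be random-unitary at all (Landau--Streater / Werner--Holevo type examples), and, as the example above shows, channels whose Kraus operators all commute with $H$ need not even be unital, let alone decompose into unitaries commuting with $H$. So your convexity computation, while internally correct, proves the statement only for a strictly smaller class of channels than the one the lemma addresses (it does cover qubits, where unital energy-preserving channels are indeed mixtures of diagonal unitaries, but the paper uses the lemma in arbitrary dimension). The repair is simply to promote your ``backbone'' to the proof itself, which is what the paper does: take unitality of $\Lambda_{EPC}$ as the defining input, apply Uhlmann's theorem to get $\Lambda_{EPC}(\rho)\prec\rho$, use Schur-concavity of the passive energy to get $P(\Lambda_{EPC}(\rho))\ge P(\rho)$, and finish with energy preservation --- no decomposition of the channel into unitaries is needed anywhere.
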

\begin{proof}
The set of energy-preserving channels (EPC) defined by \(\Lambda_{EPC}\) are unital, as they preserve the identity, i.e., \(\Lambda_{EPC}(I)=I\) \cite{Chiribella2017}. If a transformation \(\zeta \mapsto \Tilde{\zeta}\) is governed by any unital map, then the system \(\zeta\) majorizes \(\Tilde{\zeta}\) \cite{Chiribella2017U}. Majorization is a general concept used to quantify the disorderness of a system, and if a system \(\zeta\) majorizes \(\Tilde{\zeta}\) (denoted \(\zeta \succ \Tilde{\zeta}\)), then the entropic relation \(S(\zeta) \leq S(\Tilde{\zeta})\) holds. Interestingly, the corresponding passive state energies also follows this order: \(E(\zeta_p) \leq E(\Tilde{\zeta}_p)\) if the majorization relation holds \cite{Marshall2011}. Therefore, if \(\Lambda_{EPC}(\zeta) = \Tilde{\zeta}\), then \(W_e(\zeta) \geq W_e (\Tilde{\zeta})\), as the transformation is energy-preserving. This completes the proof.
\end{proof}
This proof can be extended even to the class of energy non-increasing unital channels to exhibit ergotropic monotonicity. However, in the next lemma, we will demonstrate some ergotropy non-increasing channels that need not be unital. In \cite{Swati2023}, the authors define a channel:
\begin{equation}\label{EPCPR}
    \Lambda_{EPCPR} (\zeta) = p \Lambda_{EPC} (\zeta) + (1-p) \tau_p,
\end{equation}
where `EPCPR' stands for ``energy-preserving channels and passive resets," and \(\tau_p\) is an arbitrary passive state where the resetting occurs with probability \(1-p\).
\begin{Lemma}\label{lemma2}\cite{Swati2023}
    Ergotropy is monotone under $\Lambda_{EPCPR}$ channels.
\end{Lemma}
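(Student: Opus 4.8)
The goal is to show that $W_e\big(\Lambda_{EPCPR}(\rho)\big) \le W_e(\rho)$ for every input state $\rho$. The plan is to isolate three ingredients: the \emph{convexity} of ergotropy as a function of the state, the vanishing of ergotropy on the passive reset $\tau$, and the monotonicity already granted by Lemma~\ref{lemma1} for the energy-preserving part $\Lambda_{EPC}$. Writing $\rho' := \Lambda_{EPC}(\rho)$, the channel output is the convex mixture $\Lambda_{EPCPR}(\rho) = p\,\rho' + (1-p)\tau$, so the whole problem reduces to controlling how $W_e$ behaves under convex combinations of states.

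The first and central step is to establish that $\rho \mapsto W_e(\rho)$ is convex. I would argue this directly from the definition $W_e(\rho) = E(\rho) - \min_U E(U\rho U^\dagger)$. The term $E(\rho) = \tr(\rho H)$ is linear in $\rho$, and for each fixed unitary $U$ the map $\rho \mapsto \tr(U\rho U^\dagger H)$ is likewise linear; hence the passive energy $E(\rho_p) = \min_U \tr(U\rho U^\dagger H)$, being a pointwise minimum of linear functionals, is concave in $\rho$. A linear function minus a concave function is convex, so $W_e$ is convex.

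With convexity in hand I would apply it to the mixture to obtain
\[
W_e\big(\Lambda_{EPCPR}(\rho)\big) = W_e\big(p\,\rho' + (1-p)\tau\big) \le p\,W_e(\rho') + (1-p)\,W_e(\tau).
\]
Since $\tau$ is passive by hypothesis, $W_e(\tau) = 0$, leaving $W_e\big(\Lambda_{EPCPR}(\rho)\big) \le p\,W_e(\rho')$. Ergotropy is non-negative (applying the identity extracts no work, so the maximum is at least zero) and $0 \le p \le 1$, giving $p\,W_e(\rho') \le W_e(\rho')$. Finally, $\rho' = \Lambda_{EPC}(\rho)$ is the image of $\rho$ under an energy-preserving channel, so Lemma~\ref{lemma1} supplies $W_e(\rho') \le W_e(\rho)$. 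Chaining these inequalities yields $W_e\big(\Lambda_{EPCPR}(\rho)\big) \le W_e(\rho)$, as claimed.

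The argument is short, and the only step requiring genuine care is the convexity claim, i.e.\ the concavity of the passive energy. I expect this to be the main (if modest) obstacle: one must verify that $\min_U \tr(U\rho U^\dagger H)$ truly equals the energy $E(\rho_p)$ of the ordered passive state, and that the minimum is \emph{attained} (the unitary group is compact, so this holds) rather than merely an infimum, so that the pointwise-minimum-of-linear-functions argument is legitimate. Everything else — linearity of the energy, $W_e(\tau)=0$ for passive $\tau$, non-negativity of ergotropy, and the appeal to Lemma~\ref{lemma1} — is routine.
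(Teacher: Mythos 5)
Your proof is correct and follows essentially the same route as the paper's: convexity of $W_e$ applied to the mixture $p\,\Lambda_{EPC}(\rho)+(1-p)\tau$, vanishing ergotropy of the passive reset $\tau$, non-negativity of $W_e$ together with $p\le 1$, and finally Lemma~\ref{lemma1} for the energy-preserving part. The only difference is that you explicitly justify the convexity of ergotropy (linear energy minus a concave passive energy, the latter being a pointwise minimum of linear functionals), a fact the paper simply asserts, which is a welcome strengthening rather than a deviation.
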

\begin{proof}
   Ergotropic work is a convex function, i.e., \(W_e(\sum_i p_i \zeta_i) \leq \sum_i p_i W_e(\zeta_i)\). This implies that if we have knowledge of the individual states \(\zeta_i\) that compose a given mixture, we can extract at least as much work as we can from the mixed state itself. Utilizing this fact and the expression for the EPCPR channel (Eq. \ref{EPCPR}), we can demonstrate that:
\begin{align}
    W_e(\Lambda_{EPCPR}(\zeta)) &\leq p W_e(\Lambda_{EPC}(\zeta)) + (1-p) W_e(\tau_p) \nonumber \\
    &\leq p W_e(\Lambda_{EPC}(\zeta)) \nonumber \\
     &\leq W_e(\Lambda_{EPC}(\zeta)) \nonumber \\
     &\leq W_e(\zeta).
\end{align}
The last inequality is derived using Lemma \ref{lemma1}, thus completing the proof.
\end{proof}
The set of ergodic channels considered in this article has the form \(\Lambda_{\tau_d} (\zeta) = p_t \zeta + (1-p_t) \tau_d\) which is quite similar to the EPCPR channel, where \(\Lambda_{EPC}\) is treated as an identity channel. However, a significant difference arises from the state \(\tau_d\), which does not necessarily have to be a passive state. However we can always, in principle, engineer the system Hamiltonian such that \(\tau_d\) becomes a passive state (later on denoted by $\tau_p$ only). In that case, the channels become quite similar, with the only difference being the time-dependent probability \(p_t\) due to the dynamical evolution of the ergodic channel. The following lemma is essential to exhibit the monotonicity of ergodic channels. In the subsequent lemma and theorem, we demonstrate how CP-divisibility can lead to this monotonicity throughout the dynamics of the ergodic channel.

\begin{Lemma}\label{lemma3}
    For a pair of EPCPR channels \(\Lambda^p_{\tau_p}(\zeta)=p\zeta+(1-p)\tau_p\) and \(\Lambda^q_{\tau_p}(\zeta)=q\zeta+(1-q)\tau_p\), \(W_e(\Lambda^p_{\tau_p}(\zeta)) \geq W_e(\Lambda^q_{\tau_p}(\zeta))\) if and only if \(p \geq q\).
\end{Lemma}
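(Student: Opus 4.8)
The plan is to reduce the statement to a one-parameter monotonicity question about the function \(f(p) := W_e(\Lambda^p_{\tau}(\rho)) = W_e(p\rho + (1-p)\tau)\), and to attack it using the single structural property of ergotropy already exploited in the proof of Lemma \ref{lemma2}, namely its convexity \(W_e(\sum_i p_i \rho_i) \le \sum_i p_i W_e(\rho_i)\), together with the boundary fact \(W_e(\tau)=0\) (which holds because \(\tau\) is passive). The crucial observation is that the whole family \(\{\Lambda^p_\tau(\rho)\}_{p}\) lies on the line segment joining \(\rho\) (at \(p=1\)) to \(\tau\) (at \(p=0\)), so that \(f\) is the restriction of the convex functional \(W_e\) to an affine segment and is therefore itself convex in \(p\), with \(f(0)=0\) and \(f\ge 0\) throughout.

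For the ``if'' direction I would argue as follows. Assume \(p \ge q\); the case \(p=0\) forces \(q=0\) and is trivial, so take \(p>0\). Then \(q/p \in [0,1]\) and a direct substitution gives the mixture identity
\[
\Lambda^q_\tau(\rho) = \frac{q}{p}\,\Lambda^p_\tau(\rho) + \Big(1-\frac{q}{p}\Big)\tau .
\]
Applying convexity of \(W_e\) to this decomposition and using \(W_e(\tau)=0\) yields \(W_e(\Lambda^q_\tau(\rho)) \le \frac{q}{p}\,W_e(\Lambda^p_\tau(\rho)) \le W_e(\Lambda^p_\tau(\rho))\), since \(q/p \le 1\) and ergotropy is non-negative. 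This is precisely \(f(q) \le f(p)\), establishing monotonicity in one direction.

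For the ``only if'' direction I would contrapose: assuming \(q > p\), the same mixture argument with the roles of \(p\) and \(q\) exchanged gives \(f(p) \le f(q)\), which contradicts \(f(p) \ge f(q)\) once the inequality is shown to be strict. The main obstacle is therefore upgrading monotonicity to strict monotonicity. I would handle this through the convexity of \(f\): a convex, non-negative function on \([0,1]\) with \(f(0)=0\) has a non-negative right derivative at the origin, hence is non-decreasing, and can fail to be strictly increasing only by being identically zero on some initial segment \([0,a]\). Writing \(\omega_p = p\rho+(1-p)\tau\), that degenerate possibility is excluded whenever \(\rho\) is active and does not commute with \(H\): then \(\omega_p\) carries off-diagonal weight in the energy eigenbasis proportional to \(p\) for every \(p\in(0,1]\), so \(\omega_p\) is non-passive and \(f(p)>0\), leaving no flat segment. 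Thus \(f\) is strictly increasing and \(f(p)\ge f(q)\) forces \(p\ge q\).

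I expect the only genuinely delicate point to be this potential flat initial segment. For generic active inputs it is ruled out immediately by the convexity-plus-boundary argument above; the subtle remaining situation is an input that commutes with \(H\) but is incorrectly ordered, where \(\omega_p\) can remain passive for small \(p\), and which must be treated separately by directly tracking the reordering of the eigenvalues \(p\rho_{ii}+(1-p)\tau_{ii}\) as \(p\) grows.
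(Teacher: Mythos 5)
Your ``if'' direction is, in substance, the paper's own proof with Lemma~\ref{lemma2} unrolled: the mixture identity $\Lambda^q_\tau(\rho)=\tfrac{q}{p}\,\Lambda^p_\tau(\rho)+\bigl(1-\tfrac{q}{p}\bigr)\tau$ is exactly the statement that $\Lambda^q_\tau=N^{q/p}_\tau\circ\Lambda^p_\tau$ for the reset channel $N^r_\tau(\cdot)=r(\cdot)+(1-r)\tau$, which is the composition the paper writes down, and your convexity-plus-$W_e(\tau)=0$ step is precisely how the paper proves Lemma~\ref{lemma2} for that channel. So on this half the two arguments coincide, yours merely being self-contained rather than citing the earlier lemma.

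Where you genuinely diverge is the ``only if'' direction, and your care exposes a real issue: the paper never proves this direction at all --- after invoking Lemma~\ref{lemma2}, which yields only monotonicity, it simply asserts the biconditional. Your reduction to strict monotonicity of $f(p)=W_e(p\rho+(1-p)\tau)$ is the right frame, and your non-commuting case is handled correctly (off-diagonal terms $p\rho_{ij}\neq 0$ keep $\omega_p$ non-passive, so $f(p)>0$ for $p>0$, and convexity with $f(0)=0$ then gives strictness). But the residual case you flag --- $\rho$ diagonal in the energy eigenbasis but wrongly ordered --- cannot be ``treated separately'' to finish the proof, because there the biconditional is simply false. Take a qubit with $H=e\ket{1}\bra{1}$, $\tau=\mathrm{diag}(0.7,\,0.3)$ (passive) and $\rho=\mathrm{diag}(0.4,\,0.6)$ (active): then $p\rho+(1-p)\tau=\mathrm{diag}(0.7-0.3p,\,0.3+0.3p)$ is passive for all $p\leq 2/3$, so $f\equiv 0$ on $[0,2/3]$ and $f(p)\geq f(q)$ holds even when $p<q$ in that interval. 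Hence the ``only if'' claim holds only under an extra hypothesis --- e.g.\ $\rho$ carrying coherences in the energy basis, or the maximization over inputs that the paper itself performs later when defining $W_e^M(t)$, for which $f$ is strictly increasing. Your proposal goes as far as a correct proof can go; the case it leaves open is a defect of the lemma's statement (shared, silently, by the paper's proof), not of your argument.
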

\begin{proof}
    Define an EPCPR channel:
    \begin{align}
        N^r_{\tau_p}(\zeta) = r \zeta + (1-r) \tau_p.
    \end{align}
    We can always find a suitable value of probability \(r\) such that the action of channel \(\Lambda^q_{\tau_p} (\zeta)\) can be obtained for any arbitrary density operator by the composition \(N^r_{\tau_p} \circ \Lambda^p_{\tau_p} (\zeta)\). In that case, we need to choose \(r = \frac{q}{p}\). As the channels are EPCPR, using Lemma \ref{lemma2}, we can directly state that
    \[
    W_e(\zeta)\geq W_e(\Lambda^p_{\tau_p}(\zeta)) \geq W_e(\Lambda^q_{\tau_p}(\zeta)) \quad \text{if} \quad 1 \geq p \geq q.
    \]

    To prove the ``only if" part, we proceed by contradiction. Assume:  
    \[
    W_e(\Lambda^p_{\tau_p}(\zeta)) \geq W_e(\Lambda^q_{\tau_p}(\zeta)) \Rightarrow q \geq p,
    \]
    then negating the assumption gives:
    \[
    \neg \left(W_e(\Lambda^p_{\tau_p}(\zeta)) \geq W_e(\Lambda^q_{\tau_p}(\zeta))\right) \Rightarrow p \geq q,
    \]
    which contradicts the ``if" part. Therefore, we conclude:
    \[
    W_e(\Lambda^p_{\tau_p}(\zeta)) \geq W_e(\Lambda^q_{\tau_p}(\zeta)) \Rightarrow p \geq q.
    \]
\end{proof}
\begin{thm}\label{theorem2}
    CP-divisible quantum operations of the form: $\Lambda_{\tau_p}(\zeta) = p_t\zeta_0+(1-p_t)\tau_p,$ retain the monotonicity of ergotropy. 
\end{thm}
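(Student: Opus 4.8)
The plan is to reduce the time-monotonicity of ergotropy along a CP-divisible ergodic trajectory to the single-step statement already established in Lemma \ref{lemma2} (equivalently Lemma \ref{lemma3}). The key observation is that, for the ergodic family $\Lambda_\tau^{(t)}(\rho)=p_t\rho+(1-p_t)\tau$, CP-divisibility supplies an \emph{explicit} propagator of the same algebraic form. First I would write, for any $t_1<t_2$, the evolved states $\rho(t_1)=p_{t_1}\rho_0+(1-p_{t_1})\tau$ and $\rho(t_2)=p_{t_2}\rho_0+(1-p_{t_2})\tau$, and verify by direct substitution that the intermediate map $\Lambda(t_2,t_1)(\sigma)=r\,\sigma+(1-r)\tau$ with $r=p_{t_2}/p_{t_1}$ satisfies $\Lambda(t_2,t_1)\big(\rho(t_1)\big)=\rho(t_2)$; this is the same short computation that fixes $r=q/p$ in Lemma \ref{lemma3}.

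Second, I would invoke the divisibility analysis of the previous section: the ergodic channel is CP-divisible precisely when $\dot p_t\le 0$, i.e. when $p_t$ is non-increasing. Under this hypothesis $r=p_{t_2}/p_{t_1}\in[0,1]$ for every $t_1<t_2$, so the propagator $\Lambda(t_2,t_1)$ is a genuine EPCPR channel, with the energy-preserving part taken to be the identity (which is trivially energy preserving) and with fixed point $\tau$. I would also make explicit here the standing assumption that the Hamiltonian is engineered so that $\tau$ is passive, which is exactly the setting in which Lemma \ref{lemma2} applies.

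Third, I would apply Lemma \ref{lemma2} to this propagator to conclude
\[
W_e\big(\rho(t_2)\big)=W_e\big(\Lambda(t_2,t_1)\rho(t_1)\big)\le W_e\big(\rho(t_1)\big),
\]
and since $t_1<t_2$ were arbitrary this yields a monotonic non-increase of ergotropy along the whole CP-divisible trajectory; equivalently, one may feed $p=p_{t_1}\ge q=p_{t_2}$ directly into Lemma \ref{lemma3}. The only real subtlety — and the step I would treat most carefully — is the equivalence between CP-divisibility and monotonicity of $p_t$: I must argue that whenever some intermediate propagator fails to be CP (which for this family happens exactly when $r>1$, i.e. when $p_t$ momentarily increases) the EPCPR structure is lost and Lemma \ref{lemma2} can no longer be invoked, so that CP-divisibility is not merely sufficient but is the precise condition under which the argument closes. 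This dovetails with the non-Markovian regime analysed later, where $\dot p_t>0$ restores activity to the system and ergotropy fluctuates.
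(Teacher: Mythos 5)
Your proposal is correct and follows essentially the same route as the paper's own proof: identify CP-divisibility of the ergodic family with $p_t$ being non-increasing, observe that the intermediate propagator then has the same EPCPR form (identity as the energy-preserving part, passive fixed point $\tau$), and invoke Lemma~\ref{lemma2}/Lemma~\ref{lemma3} to conclude $W_e(\rho(t_2))\le W_e(\rho(t_1))$. The only difference is presentational: the paper establishes the intermediate map via a first-order Taylor expansion of the master equation over an infinitesimal step $t\to t+\delta$, whereas you construct the finite-time propagator with $r=p_{t_2}/p_{t_1}$ directly, which is arguably the cleaner, integrated version of the same argument.
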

\begin{proof}
    From the above discussion it is apparent that under operations of the following form, ergotropy is a monotonically decreasing quantity. 
\[
\Lambda_{\tau_p}(\zeta) = p_t\zeta_0+(1-p_t)\tau_p,
\]
with the fixed point $\tau_p$ being a passive state. The Lindblad type master equation for the passivity preserving channels is given in \eqref{Nergo} with the specific case of the fixed point $\tau_p$ being a passive state. From the proof of Theorem \ref{theorem1}, it is known that the dynamic evolution of the density matrix $\zeta_t$ for such channels can be expressed as 
\[
\dot{\zeta}_t = \frac{\dot{p}_t}{p_t}\left(\zeta_t-\tau_p\right).
\]
We also know from the proof of Theorem \ref{theorem1} that the solution of this equation gives rise to the channel $\Lambda_{\tau_p}(\cdot)$. If we now consider an intermediate map from an arbitrary time $t$ to $t+\delta$, we see that the density matrix takes the form 
\[
\begin{array}{ll}
\lambda_{\tau_p}^\delta(\zeta_t)=\zeta_{t+\delta}= \zeta_t+\delta\frac{d\zeta_t}{dt}\\
\\
\mbox{(Here we take taylor expansion upto 1st order)}\\
\\
= \zeta_t+\delta\frac{\dot{p}_t}{p_t}\left(\zeta_t-\tau\right)
= \left(1+\delta\frac{\dot{p}_t}{p_t}\right)\zeta_t-\delta\frac{\dot{p}_t}{p_t}\tau_p\\
\\
= \left(p_t+\delta \dot{p}_t\right)\zeta_0 +\left(1-p_t-\delta \dot{p}_t\right)\tau_p\\
\\
= p_{t+\delta}\zeta_0+(1-p_{t+\delta})\tau_p
\end{array}
\]
So the evolution $\lambda_{\tau_p}^\delta(\cdot)$ retains the form of $\Lambda_{\tau_p}$ if $p_{t+\delta}$ is a probability for every value of $p_t$. That can only be possible if $\dot{p}_t < 0$ $\forall t$ (here we use the condition that $\delta$ is so small that $\delta\dot{p}_t << 1$). This in fact is the condition for CP-divisibility for such ergodic channels. It is clear that for any time \(t\) to \(t+\delta\), the probability fraction over the initial state \(p_t\) changes to \(p_{t+\delta} \leq p_t\). Using Lemma \ref{lemma3}, we can conclude that \(W_e(\Lambda^{p_{t+\delta}}_{\tau_p}(\zeta)) \leq W_e(\Lambda^{p_t}_{\tau_p}(\zeta))\). Hence, the theorem is proven.
\end{proof}

\begin{figure*}[ht]
    \centering
    \begin{subfigure}[b]{0.48\textwidth}
        \centering
        \includegraphics[width=\textwidth]{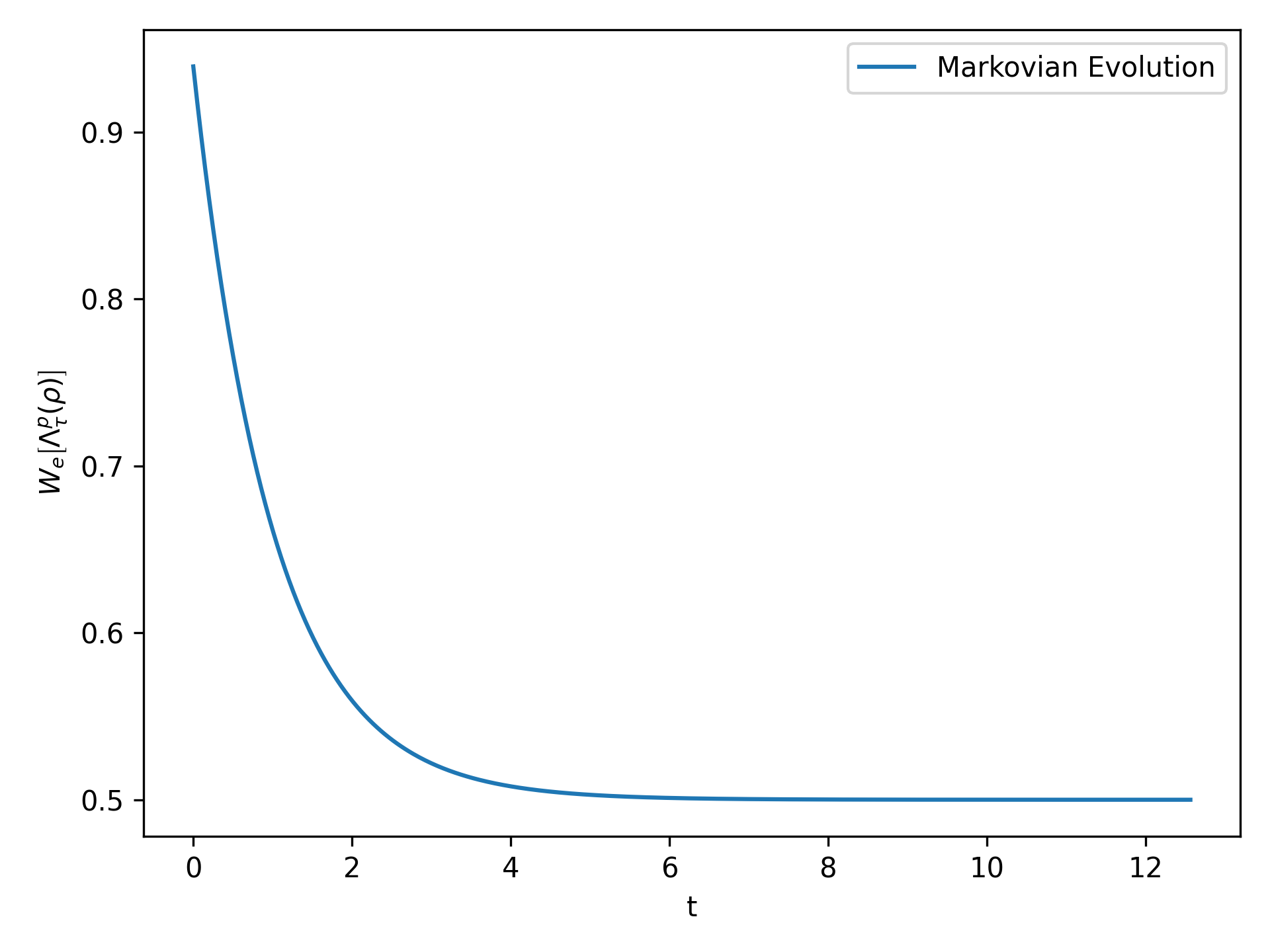}
        \caption{}
        \label{fig:ergo_exp}
    \end{subfigure}
    \hfill
    \begin{subfigure}[b]{0.48\textwidth}
        \centering
        \includegraphics[width=\textwidth]{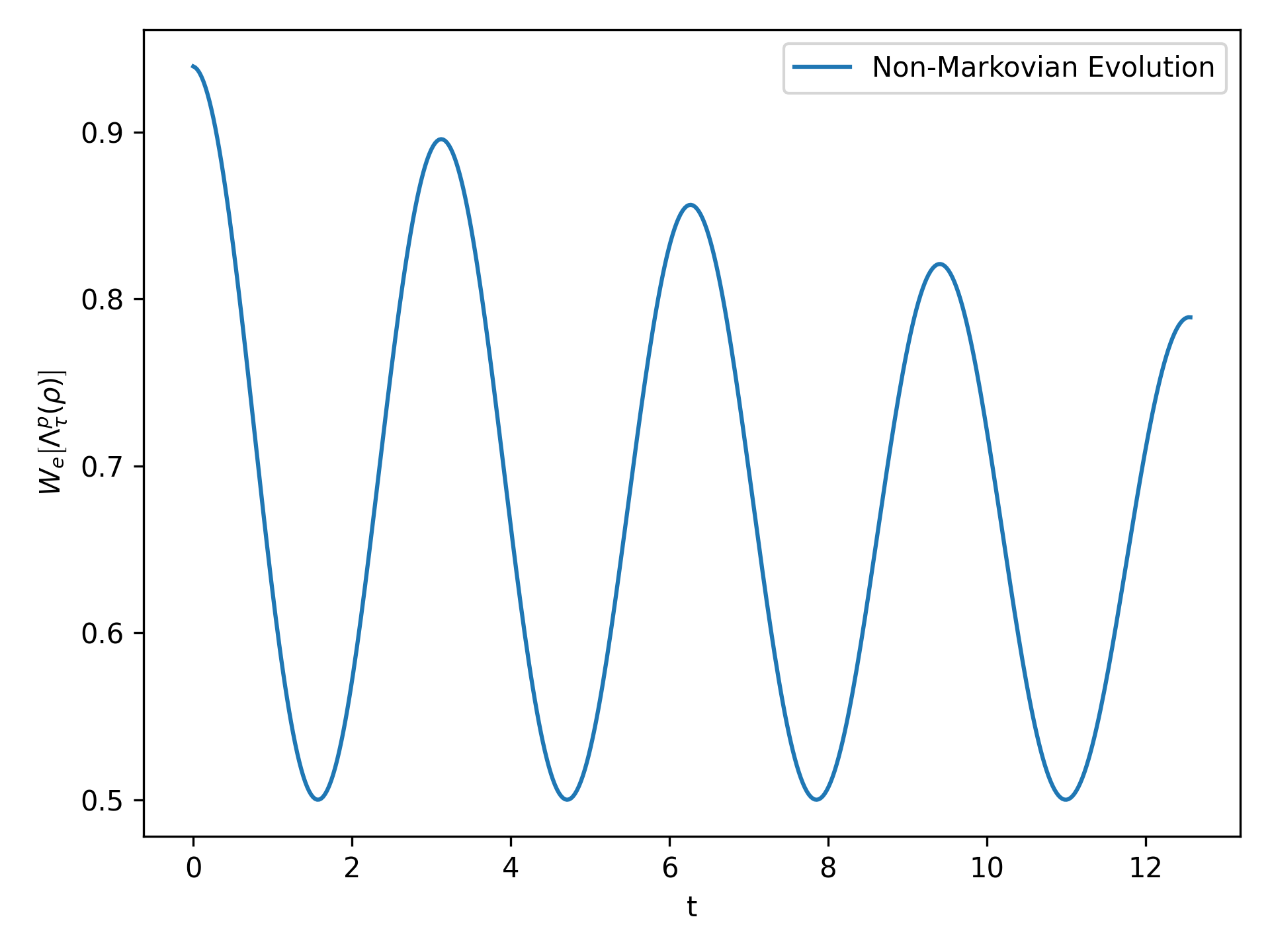}
        \caption{}
        \label{fig:ergo_cos}
    \end{subfigure}
    \caption{{\it Change of ergotropy under Markovian and non-Markovian evolution.--} 
The initial state is chosen as \( \ket{\psi} = \alpha \ket{0} + \beta \ket{\phi} + \gamma \ket{\phi^\perp} \), a state in the qutrit system, with coefficients \( (\alpha, \beta, \gamma) = \left(\sqrt{\frac{1}{4}}, \sqrt{\frac{1}{2}}, \sqrt{\frac{1}{4}}\right) \), where \( \ket{\phi} = \frac{\ket{1} + \ket{2}}{\sqrt{2}} \) and \( \ket{\phi^\perp} = \frac{\ket{1} - \ket{2}}{\sqrt{2}} \). The passive state is defined as \( \tau = \frac{1}{2} \ket{0}\bra{0} + \frac{1}{4} \ket{1}\bra{1} + \frac{1}{4} \ket{2}\bra{2} \), which does not correspond to any thermal state under the given Hamiltonian \( H_s = \ket{1}\bra{1} + 2\ket{2}\bra{2} \). Note that this case of non-thermal passive state arrives as the analysis moves to the qutrit dimension, whereas for the qubit case any passive state corresponds to a thermal state. The qutrit passive dynamics used in this case is derived from equation \eqref{ergoN1} with $\tau$ being the sole fixed point. In Fig. 2(a), we examine the monotonic decay of ergotropy under the Markovian EPCPR channel, consistent with the loss of extractable work over time. In contrast, Fig. 2(b) illustrates the fluctuation of ergotropy under non-Markovian dynamics, highlighting memory retention and information backflow during the evolution.}

    \label{fig:ergotropy}
\end{figure*}

\begin{cor}\label{cor1}
For the EPCPR operation, if it is not CP-divisible, the ergotropy will increase at all those times, when CP-divisibility breaks down.    
\end{cor}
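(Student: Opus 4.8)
The plan is to inherit the local description of the dynamics established in the proof of Theorem \ref{theorem2} and simply to reverse the direction of the inequality supplied by Lemma \ref{lemma3}. The essential observation is that at every instant the evolved state is the output of an EPCPR channel acting on the common initial state, $\rho(t)=p_t\rho_0+(1-p_t)\tau=\Lambda^{p_t}_{\tau}(\rho_0)$, so comparing ergotropy at two nearby times reduces to comparing two EPCPR channels with different probabilities applied to the same fixed input $\rho_0$.

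First, I would recall from the proof of Theorem \ref{theorem2} that CP-divisibility of the ergodic channel is equivalent to the condition $\dot{p}_t<0$, since only in that regime does the intermediate propagator from $t$ to $t+\delta$ retain a legitimate EPCPR form. Consequently, CP-divisibility breaks down precisely at those times $t$ for which $\dot{p}_t>0$. Second, at such a time, a first-order expansion gives $p_{t+\delta}=p_t+\delta\dot{p}_t>p_t$ for sufficiently small $\delta>0$, so the probability weight on the active initial state strictly increases across the step. Third, I would invoke Lemma \ref{lemma3} with $p=p_{t+\delta}$ and $q=p_t$: since $p_{t+\delta}>p_t$, the lemma yields $W_e(\Lambda^{p_{t+\delta}}_{\tau}(\rho_0))\ge W_e(\Lambda^{p_t}_{\tau}(\rho_0))$, i.e. $W_e(\rho(t+\delta))\ge W_e(\rho(t))$. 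Thus ergotropy is non-decreasing across every step on which divisibility is violated, which is exactly the claimed ergotropy backflow.

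The main subtlety, and the step I expect to require the most care, is disentangling two superficially conflicting facts. When $\dot{p}_t>0$ the \emph{local} propagator from $t$ to $t+\delta$ carries effective weight $1+\delta\dot{p}_t/p_t>1$ on the input, and therefore fails to be a legitimate quantum channel; this is precisely what breaks CP-divisibility. One must be careful not to feed this non-physical intermediate map into Lemma \ref{lemma3}. The resolution is that the corollary never requires the intermediate map to be CP: it only needs both endpoints $\rho(t)$ and $\rho(t+\delta)$ to be genuine EPCPR images of the fixed initial state $\rho_0$, which holds because $p_t,p_{t+\delta}\in[0,1]$. Once this reformulation is in place, the monotone direction of Lemma \ref{lemma3}---larger probability weight on the active input implies larger extractable work---delivers the increase immediately, and the corollary follows as a clean mirror image of Theorem \ref{theorem2}.
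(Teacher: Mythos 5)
Your proposal is correct and follows essentially the same route as the paper's own proof: identify CP-divisibility breakdown with $\dot{p}_t>0$ (hence $p_{t+\delta}>p_t$) via Theorem \ref{theorem2}, then apply Lemma \ref{lemma3} to the two endpoint states $\Lambda^{p_{t+\delta}}_{\tau}(\rho_0)$ and $\Lambda^{p_t}_{\tau}(\rho_0)$ to conclude the ergotropy increase. Your added remark---that Lemma \ref{lemma3} is applied to the two legitimate EPCPR images of $\rho_0$ rather than to the non-CP intermediate propagator---is a useful clarification the paper leaves implicit, but it does not change the argument.
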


\begin{proof}
This is a consequence of Lemma \ref{lemma3} and Theorem \ref{theorem2}. From Theorem \ref{theorem2}, it is apparent that  CP-divisibility of EPCPR channels will break down, when $\dot{p_t} > 0$, i.e. $p_{t+\delta}>p_t$; which means the instantaneous state goes away from the passive fixed point $\tau$. It is therefore natural that ergotropy will increase when $\dot{p_t} > 0$. This fact can be proved mathematically by virtue of Lemma \ref{lemma3}. We can use Lemma \ref{lemma3} to state that for all those $t,~\mbox{and}~\delta$, for which $\dot{p_t} > 0$, we have $p_{t+\delta}>p_t$, and hence ergotropy at $t+\delta$ is greater than ergotropy at $t$. This proves the corollary. 
\end{proof}

Next, we show how the ergotropy of an arbitrary qubit state $\rho$ changes through a channel $\Lambda^p_{\tau_p}$.
\begin{proposition}\label{prop1}
For a qubit-ergodic channel \(\Lambda^p_{\tau_p}(\rho)\), the expression for ergotropy is given by

\begin{equation}\label{qubit_ergotropy}
\begin{array}{ll}
    W_e\left[\Lambda^p_{\tau_p}(\rho)\right] =\\
    \\
    \frac{e}{2}\left[\sqrt{p^2(r^2 - z^2) + (p(z + z_{\tau}) - z_{\tau})^2} - pz - (1 - p)z_{\tau_p}\right],
\end{array}
\end{equation}
where \(\vec{r} = (x, y, z)\) is the Bloch vector of the state \(\rho\), \(z_{\tau_p}\) is the z-component of the passive state \(\tau_p\), and the system is governed by the Hamiltonian \(H = e|1\rangle\langle1|\) with $e >0$.
\end{proposition}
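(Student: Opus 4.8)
The plan is to reduce the ergotropy to a simple function of the output state's Bloch vector and then evaluate that Bloch vector explicitly. First I would record the closed form for the ergotropy of an arbitrary qubit. Writing a qubit state as $\sigma = \frac{1}{2}(I + \vec{s}\cdot\vec{\sigma})$ with Bloch vector $\vec{s}=(s_1,s_2,s_3)$, its eigenvalues are $\frac{1}{2}(1\pm|\vec{s}|)$. For the Hamiltonian $H=e|1\rangle\langle1|$ with $e>0$, whose ground state is $|0\rangle$, the passive state defined in the excerpt places the larger eigenvalue $\frac{1}{2}(1+|\vec{s}|)$ on $|0\rangle$, so that $E(\sigma_p)=e\cdot\frac{1}{2}(1-|\vec{s}|)$. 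Since $E(\sigma)=e\,\langle1|\sigma|1\rangle=\frac{e}{2}(1-s_3)$, subtracting yields the compact expression $W_e(\sigma)=\frac{e}{2}\left(|\vec{s}|-s_3\right)$, valid for any single qubit.

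Second, I would compute the Bloch vector of the output state. Because $\Lambda^p_{\tau}$ acts affinely, the state $\Lambda^p_{\tau}(\rho)=p\rho+(1-p)\tau$ has Bloch vector equal to the convex combination of the two input Bloch vectors. The input $\rho$ contributes $(x,y,z)$, while the fixed point $\tau$, being a passive state, is diagonal in the energy eigenbasis and therefore has Bloch vector $(0,0,z_{\tau})$. Hence the output Bloch vector is $\vec{s}=(px,\,py,\,pz+(1-p)z_{\tau})$, and in particular $s_3=pz+(1-p)z_{\tau}$.

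Finally I would substitute into $W_e=\frac{e}{2}(|\vec{s}|-s_3)$. Using $|\vec{s}|^2=p^2x^2+p^2y^2+s_3^2$ together with $x^2+y^2=r^2-z^2$, the transverse contribution collapses to $p^2(r^2-z^2)$, producing the term under the square root, while the linear piece $-s_3=-pz-(1-p)z_{\tau}$ supplies the remaining terms. This reproduces the claimed expression in Eq.~\eqref{qubit_ergotropy}.

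I expect the only genuine subtlety, rather than a true obstacle, to be the identification of the passive state together with the bookkeeping of the energy minimization: one must confirm that the larger population sits in the ground state $|0\rangle$, which holds precisely because $e>0$, and one must be careful with the sign conventions entering the $z$-component of $\tau$, since the cross terms generated by squaring $pz+(1-p)z_{\tau}$ are exactly where sign slips naturally arise. Beyond this, the argument is routine linear algebra on the Bloch sphere.
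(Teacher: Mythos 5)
Your proof is correct and takes essentially the same route as the paper's own: the closed form $W_e(\sigma)=\frac{e}{2}\left(|\vec{s}|-s_3\right)$ for a qubit with Hamiltonian $e|1\rangle\langle 1|$, the affine action of $\Lambda^p_\tau$ on Bloch vectors giving $\vec{s}=(px,\,py,\,pz+(1-p)z_\tau)$, and direct substitution. One caveat worth flagging: both your substitution and the paper's yield $\left(pz+(1-p)z_\tau\right)^2$ under the square root, whereas the proposition as printed contains $\left(p(z+z_\tau)-z_\tau\right)^2=\left(pz-(1-p)z_\tau\right)^2$, which differs by $4p(1-p)z\,z_\tau$; so the displayed formula carries a sign typo (it should read $p(z-z_\tau)+z_\tau$), and your final claim of exact reproduction holds only modulo that typo.
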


\begin{proof}
For an arbitrary qubit state \(\rho\), ergotropy is expressed as:\begin{align}
    W_e(\rho) &= E(\rho) - E(\rho_p) \nonumber \\
    &= \left(\frac{1 - z}{2}\right)e - \left(\frac{1 - r}{2}\right)e,
\end{align}
where \(r = \sqrt{x^2 + y^2 + z^2}\) denotes the Bloch vector length of the state \(\rho\).

For the ergodic channel \(\Lambda^p_{\tau_p}(\rho) = \nu = p\rho + (1 - p)\tau_p\), the ergotropy of the evolved state is denoted by:
\begin{align}
    W_e(\nu) &= E(\nu) - E(\nu_p) \nonumber \\
    &= \left(\frac{1 - z'}{2}\right)e - \left(\frac{1 - r'}{2}\right)e,
\end{align}
where \(\vec{r'} = (x', y', z') = (px, py, pz + (1 - p)z_{\tau_p})\) is the evolved Bloch vector for the state \(\nu\). Substituting these values into the ergotropy expression, we obtain Eq. \ref{qubit_ergotropy}.
\end{proof}

\subsection{Thermodynamic Memory Effect in Quantum Batteries}


Ergotropy quantifies the maximum extractable useful energy from an isolated quantum battery via unitary operations, positioning it as a promising resource for future quantum technologies. In Theorem \ref{theorem2}, we demonstrate that during open-system evolution, particularly under Markovian ergodic dynamics, the quantum battery undergoes a gradual self-discharge of ergotropy (useful stored energy) into the environment. This process ultimately leads the system to a passive, fixed-point state \(\tau_p\), rendering the battery's stored energy unusable (see Fig. \ref{fig:ergotropy}a). However, under non-Markovian dynamics, the ergotropy exhibits fluctuations during its evolution. To illustrate this explicitly, we consider a passive state in the qutrit dimension that serves as a fixed point of the EPCPR channel (see Fig. \ref{fig:ergotropy}b). This state is chosen such that its probability distribution deviates from the Gibbsian distribution, \( \frac{\exp(-\beta H)}{Z} \). The origin of these fluctuations is linked to the lack of CP divisibility in the channel dynamics. These fluctuations reveal a memory-assisted energy retention effect, where information flows back from the environment to the system—one of the defining characteristics of non-Markovianity. Notably, ergotropy, a vital resource for quantum batteries, captures this information backflow, demonstrating how non-Markovian dynamics can influence resource behaviors in quantum systems.

This framework also naturally extends to anti-ergotropy, another quantity of interest for quantum batteries. Anti-ergotropy measures the maximum amount of energy that can be added to an isolated quantum battery through unitary operations. For a quantum state \(\zeta\), anti-ergotropy is defined as:  
\[
A_e(\zeta) = \max_U \big[E(U\zeta U^\dagger) - E(\zeta)\big],
\]
where the optimal unitary transfers \(\zeta\) to its corresponding active state. Active states share similarities with passive states—they are diagonal in the energy basis—but exhibit higher probabilities for higher energy levels. For the class of EPCPR channels, if the fixed point is replaced by active states, the channel transforms into an EPCAR (energy preserving channel and active resets) channel. Straightforward calculations reveal that anti-ergotropy exhibits monotonicity under completely positive (CP) divisible dynamics. However, for non-Markovian processes, anti-ergotropy is expected to display memory-induced energy retention during the dynamics, akin to the behavior observed for ergotropy. These findings provide deeper insights into how non-Markovianity impacts in charging quantum batteries, emphasizing the role of memory effects in enhancing resource retention during open-system evolution.

\subsection{Ergotropy as a Measure of Non-markovianity}
Based on the above discussion, it is now necessary to define a measure of non-Markovianity by exploiting ergotropy $W_e(\rho)$. In order to do that, we construct the following quantity 
\begin{equation}\label{nm-ergo}
    \mathcal{B}_W(t) = \frac{d}{dt}W_e^M(t),~~\left[\forall t ~with~\frac{d}{dt}W_e^M(t) > 0\right],
\end{equation}
where $W_e^M(t)=\max_\rho W_e(\zeta)$ is the maximum ergotropy over all possible input states. The maximisation is done to make the measure a state independent quantity. Here $\mathcal{B}_W(t)$ is an information backflow based non-Markovianity measure. To get a normalised time integrated measure, we define 
\begin{equation}\label{nm-ergo1}
\mathcal{N}_W = \frac{N_W}{1+N_W},    
\end{equation}
where 
\[
N_W = \int_{\mathcal{B}_W(t) > 0}\mathcal{B}_W(t)dt.
\]
In the following, we explicitly calculate this measure for qubit passivity EPCPR operation $\Lambda^{p_t}_{\tau}(\rho)$. For such operation, the general expression of ergotropy with Hamiltonian $e\ket{1}\bra{1}$, is given in Proposition \ref{prop1} and exploiting which, we calculate 
\[
\begin{array}{ll}
\mathcal{B}_W(t) = \frac{e}{2}\frac{d}{dt}\left(\frac{\left(2 p_t^2-1\right)z_{\tau}+\sqrt{\frac{p_t+1}{1-p_t}+z_{\tau}^2}}{p_t+1}\right)\\
\\
= \frac{e}{2}\left((1+4p_t+2p_t^2)z_{\tau}+\frac{1}{\sqrt{\frac{p_t+1}{1-p_t}+z_{\tau}^2}}\left(\frac{p_t(1+p_t)}{(1-p_t)^2}-z_{\tau}^2\right)\right)\frac{\dot{p}_t}{(1+p_t)^2}
\end{array}
\]
From the expression, it can be seen that the term in the parenthesis is always positive irrespective of the value of $p_t$ (since $z_\tau \geq 0$ for the given case), and hence $\mathcal{B}_W(t)$ is positive, iff $\dot{p}_t > 0$, i.e. the dynamics is non-Markovian. This observation coincides with our claim presented in Corollary \ref{cor1}. It is also important to mention that our analysis and quantification of non-Markovianity from the perspective of passivity preserving channels are restricted for such channels falling within the class of ergodic channels represented by equation \eqref{ergoN1}. Though the class of channels in question represents a very large set, there are passivity preserving channels not represented by equation \eqref{ergoN1} that are outside the purview of current analysis. This fact encourages further study on the specific topic in question.

\section{conclusion} 

In this work, we have characterized a class of non-unital quantum channels with a singular fixed point, termed ergodic channels, and constructed Lindblad-type master equations to study their non-Markovian dynamics. The structure of these channels enables a comprehensive understanding of both divisibility breaking and information backflow in arbitrary-dimensional systems. Notably, we find that the infinitesimal divisibility of ergodic qubit channels is independent of the channel’s fixed point, a result that is consistent with the RHP measure of non-Markovianity, as infinitesimally divisible channels are solutions of time-dependent GKLS equations.  

For ergodic channels, standard distance measures such as trace distance and relative entropy distance exhibit monotonicity in Markovian dynamics, as expected. However, the physical implications become more pronounced when these channels are linked to thermodynamic systems. For example, thermal ergodic channels have relative entropy distance measures that correspond to free energy differences, directly connecting information-theoretic and thermodynamic quantities. In non-Markovian dynamics, memory retention effects emerge, revealing how work can be recovered as information flows back into the system.

For more general ergodic channels, particularly passivity-preserving channels, since temperature is not well-defined for non-equilibrium states, relative entropy no longer directly corresponds to extractable work as it does when a thermal state serves as the fixed point. Consequently, its conventional thermodynamic interpretation is lost. However, in this scenario, an alternative thermodynamic meaning can be established in terms of ergotropy.Ergotropy, an important figure for quantum batteries, quantifies the maximum extractable useful energy. Under non-Markovian dynamics, ergotropy exhibits memory retention, reflecting the environment's ability to assist in maintaining energy within the system. However, not all passivity-preserving channels exhibit monotonicity in ergotropy, and it remains an open question to determine the full class of such channels for which ergotropy behaves as a monotone.  

In the context of quantum batteries, ergotropy and anti-ergotropy quantify the maximum energy that can be extracted from or supplied to a quantum battery. These quantities are fundamental resources for the operation of quantum batteries. Under Markovian dynamics, these resources are ultimately dissipated to the environment, limiting their utility. However, in non-Markovian dynamics, memory effects allow for the temporary recovery of these resources, offering a potential advantage in optimizing the operation of quantum batteries. For instance, ergotropy may peak at specific points during non-Markovian processes, and by identifying and leveraging these moments, one can extract work from the quantum battery before it is lost to the environment. Similarly, these effects are beneficial in the context of charging processes (anti-ergotropy), which could enhance energy transfer efficiency.  
Nevertheless, understanding the dynamics of non-Markovian processes remains a complex challenge, and fully harnessing these memory effects for practical applications requires a deeper understanding of their underlying dynamics. While the potential for improved energy management in quantum batteries is clear, the realization of these advantages is still limited by the difficulties in precisely characterizing non-Markovian dynamics. Nonetheless, the study of ergotropy in ergodic channels opens up new avenues for exploring the interaction between environmental memory and quantum battery.

\section{Acknowledgements}
We thank the anonymous reviewers for their careful review and helpful comments, which have substantially improved this article. MA acknowledges the funding supported by the European Union  (Project QURES- GA No. 101153001).  Views and opinions expressed are however those of the author(s) only and do not necessarily reflect those of the European Union or European Research Executive Agency. Neither the European Union nor the granting authority can be held responsible for them. HB thanks IIIT Hyderabad for its warm hospitality during his visit  in the initial phase of this work. SB acknowledges the support from the Ministry of Electronics and Information Technology, Government of India, under \textbf{Grant No.} 4(3)/2024-ITEA.

\appendix

\section{Construction of ergodic map in arbitrary dimension}\label{ergoMap-con}

The qubit depolarizing channel can be generalized to arbitrary dimensional channels by generalising the Pauli matrices for higher dimension. Here it is important to note that Pauli matrices are both hermitian and unitary. Therefore, the higher dimensional generalization will retain one of the two properties, but not both. In the following, we present both protocols.

This qubit depolarisng map can be generalized to d-dimension, by replacing the Pauli matrices with Weyl operators \citep{nm25} 
\begin{equation}\label{Weyl}
    \begin{array}{ll}
      \Lambda_w(\rho_{d0})=\sum_{k,l=0}^{d-1}q_{kl}W_{kl}\rho_{d0}W_{kl}^{\dagger} \\
      \\
      \mbox{with}~~W_{kl}=\sum_{m=0}^{d-1}\omega^{mk}\ket{m}\bra{m+l},~~\omega=e^{2\pi i/d} \\
      \\
      \mbox{satisfying}~~W_{kl}W_{rs}=\omega^{ks}W_{k+r,l+s},~~W_{kl}^\dagger = \omega^{kl}W_{-k,-l},
    \end{array}
\end{equation}
with $\rho_{d0}$ being the initial state in $d$-dimension and $q_{kl}$ represents some arbitrary probability. For two dimensions, Weyl channel simplifies to Pauli channel. However, for $d>2$, Weyl channels lose the Hermiticity property of Pauli channel, i.e. $Tr[A\Lambda_P(B)]=Tr[\Lambda_P(A)B]$.
To prevent this, generalizations to higher dimensions require operators $U_\alpha$ constructed from the mutual unbiased bases, (MUB) \citep{nm25}, of the Hilbert space, $\{|\psi_0^{(\alpha)}\rangle,...,|\psi_{d-1}^{(\alpha)}\rangle\}$, of the form
\begin{align*}
    U_\alpha = \sum_{l = 0}^{d-1} \omega^l P_l^{(\alpha)}
\end{align*}
where $P_l^{(\alpha)} = |\psi_l^{(\alpha)}\rangle\langle\psi_l^{(\alpha)}|$ are pure orthogonal projections.
The generalized Pauli channel therefore can be written as
\[\Lambda_{GP}(\rho_{d0}) = p_0 (t) \mathcal{I}(\rho_{d0}) + \frac{1}{d-1} \sum_{\alpha = 1}^{M} p_\alpha(t) U_{\alpha}(\rho_{d0})U_\alpha^\dagger, \]
with $M$ being the maximum number of MUBs in $d$ dimension. 

\section{Construction of the master equation}\label{sec:appendixLB}

Let us consider an invertible dynamical map, of the following form 
\begin{equation}\label{dynamicalmap}
    \rho (t) = \Omega [\rho (0)].
\end{equation}
It is therefore ensured that it corresponds to a Lindblad type master equation 
\begin{equation}\label{eqom}
    \Dot{\rho} (t) = \Tilde{\mathcal{L}} [\rho (t)]
\end{equation}
where $ \Tilde{\mathcal{L}}[.]$ is the generator of the dynamics. Following the method in Ref.\citep{master1,master2}, we now derive the said master equation. 
Let us define $\lbrace \mathcal{G}_i\rbrace$ to be the Hermitian orthonormal basis, with the properties $\mathcal{G}_0 = \mathbb{I}/\sqrt{2}$, $\mathcal{G}_i^{\dagger} = \mathcal{G}_i$, $\mathcal{G}_i$ are traceless with singular exception $\mathcal{G}_0$ and $\text{Tr}[\mathcal{G}_i\mathcal{G}_j] = \delta_{ij}$. The map in Eq.~\eqref{dynamicalmap} can be rewritten as
\begin{equation*}
    \Omega [\rho (0)] = \sum_{m,n} \text{Tr}[\mathcal{G}_m \Omega [\mathcal{G}_n]] \text{Tr}[\mathcal{G}_n \rho (0)] \mathcal{G}_m = [F(t)r(0)] \mathcal{G}^T
\end{equation*}
with $F_{mn}=\text{Tr}[\mathcal{G}_m \Omega [\mathcal{G}_n]]$ and $r_n(s) = \text{Tr}[\mathcal{G}_n \rho(s)]$. The time-derivative of the previous equation gives us
\begin{equation*}
    \dot{\rho}(t) = [\dot{F}(t)r(0)] \mathcal{G}^T.
\end{equation*}
Further consider a matrix $L$ with elements $L_{mn} = \text{Tr}[\mathcal{G}_m \Tilde{\mathcal{L}} [\mathcal{G}_n]]$. Therefore Eq.~\eqref{eqom} can be written as
\begin{equation}\label{Eq13}
    \dot{\rho}(t) = \sum_{m,n} \text{Tr}[\mathcal{G}_m]\Tilde{\mathcal{L}} [\mathcal{G}_n] \text{Tr}[\mathcal{G}_n \rho (t)] \mathcal{G}_m = [L(t)r(t)] \mathcal{G}^T.
\end{equation}
By comparison, we get
\begin{equation*}
    \dot{F}(t) = L(t) F(t) \implies L(t)= \dot{F}(t) F(t)^{-1}.
\end{equation*}
One may note that $L(t)$ can be obtained if $F(t)^{-1}$ exists and $F(0) = \mathbb{I}$. Once we get the $L$ matrix, the master equation of the form in equation \eqref{Eq13} follows directly, which can be written in the Lindblad form \citep{master1,master2}.

\section{Divisibility of the qubit Ergodic map}\label{sec:appendixDiv}
While trying to apply a channel $\Lambda$ on a system $\rho$, one might be constrained to evolve the system under certain dynamical classes. For instance, if one is limited to using Markovian evolutions, one might ask the question: what class of channels can be simulated by such a dynamics? In this section, we ask the following question: Can a given quantum channel be simulated by a Markovian process? As we have seen in the previous subsection, the concept of Markovainity of a dynamical quantum channel, especially as quantified through the RHP measure, is intricately related to the divisibility of the channel. Therefore, it behooves us to investigate the divisibility properties of a quantum channel, in effect questioning when a given channel can be constructed using other channels. For instance, it should not be surprising that a channel that can be written as a product of infinitely many channels, that are close to identity, will always be a solution of a GKLS equation, for such channels can be constructed by applying such infinitesimal channels one after the other. This property of \textit{infinitesimal divisibility} is therefore a sufficient condition for a channel to be a solution of a time-dependent GKLS equation. It turns out that for qubit channels, infinitesimal divisibility is a necessary and sufficient condition\ \cite{wolf2008dividing}.
\\
In this section, we will explore the divisibility properties of the channels of the form
\begin{equation}\label{eq:ergchnl}
    \Lambda_\tau(\rho)= p\rho+(1-p)\tau,
\end{equation}
where, $\tau$  is a fixed density matrix and $p\in [0,1]$ can be interpreted as a probability function. A qubit state is uniquely characterized by its Bloch vector: $\rho=\dfrac{1}{2}(\mathbb{I}+\sum_i a_i\sigma_i)$  which can be represented by a vector $|\rho\rangle\rangle= (1,a_1,a_2,a_3)^T$. The action of any qubit channel, say $\Lambda$, on $\rho$ can be represented as a linear transformation of its Bloch vector. Let $|\Lambda(\rho)\rangle\rangle=(1,a_1',a_2',a_3')^T$, then the action of the channel can be written as a matrix equation as follows
\begin{equation}
    \begin{aligned}
        T_\Lambda|\rho\rangle\rangle = |\Lambda(\rho)\rangle\rangle
    \end{aligned}
\end{equation}
where the $T$-matrix \cite{ruskai2002analysis} of the channel has the form
\begin{equation}
    T_\Lambda=\begin{pmatrix}
        1 & \vec{0}^T\\
        \vec{v} & R
    \end{pmatrix}.
\end{equation}
In the Bloch sphere picture, $R$ performs a three-dimensional rotation of the Bloch vector $\vec{a}=\{a_1,a_2,a_3\}$, while the vector $\vec{v}$ is the displacement of the center of the Bloch sphere. At least for a full Kraus rank channel, it is known that this representation is related to the Choi state of the map by \cite{davalos2019divisibility} 
\begin{equation}
    C_\Lambda=\sum_{i,j=0}^{3} {T_{\Lambda}}_{ij}\sigma_{i}\otimes\sigma_{j}
\end{equation}
For the qubit channel given in\ \eqref{eq:ergchnl}, let $\tau=\dfrac{1}{2}(\mathbb{I}+\sum_i b_i\sigma_i)$, then the T-matrix for this channel is given by the following
\begin{equation}\label{Tmatrix}
    T_\Lambda=\begin{pmatrix}
    1& 0& 0& 0\\
   (1-p) b_1& p& 0& 0\\
    (1-p)b_2& 0& p& 0\\
    (1-p) b_3& 0& 0& p
\end{pmatrix} .
\end{equation}

\textit{P-divisibility:} A channel is called $P-$divisible if it can be written as a product of positive maps. A dynamical map which is Markovian under the BLP measure is, therefore, always $P$-divisible since trace distance is a monotone under a positive map. For a positive map, the determinant of its $T$-matrix is positive\ \cite{ruskai2002analysis}. Since the concatenation of channels is simply a matrix multiplication of their T-matrices,  one can conclude that the positivity of the determinant of $T_\Lambda$ is a necessary condition for the $P-$divisibility of the channel $\Lambda_\tau$. Following \citep{wolf2008dividing}, it is also a sufficient condition for the $P$-divisibility of the qubit channels. Therefore, from the form of the $T$-matrix of the $\Lambda_\tau$ given in equation  \eqref{Tmatrix},  it is clear that such channels are always P-divisible, independent of the state $\tau$, since the determinant of the $T_\Lambda$-matrix is simply given by $p^3$. 
\\\\
\textit{Infinitesimal CP-divisibility}: If a given quantum channel $\Lambda$ can be written as a composition of quantum channels that are arbitrarily close to the Identity channel, we call the channel $\Lambda$ an infinitesimally CP-divisible channel. More precisely, if $\forall\ \epsilon >0$, there exists a set of channels $\Lambda_i$ s.t. $\Lambda=\prod_i \Lambda_i$ where $||\Lambda_i-I||_1 \le \epsilon$, then $\Lambda$ is said to be infinitesimally divisible.
\\
The CP-divisible dynamical channels $\Lambda_t$ s.t. $\Lambda_t=\Lambda_{t,s}\circ\Lambda_{s,0}$ are known to have a generator  $\mathcal{L}_t$ that satisfies the Lindbladian equation. Therefore, the map can be written as 
$$
\Lambda_{t,0}=\hat{T}e^{\int_0^t \mathcal{L}_l dl}.
$$
Since any map that is close to identity is approximately equal to a map with a Lindbladian generator, $\Lambda_{\delta}=I+\mathcal{L}_\delta\sim e^{\mathcal{L}_\delta}$, one can show (see\ \cite{wolf2008dividing}) that all infinitesimally divisible maps can be arbitrarily approximated in terms of the CP-divisible maps: $\mathcal{E}\sim \prod_\epsilon e^{\mathcal{L}_\delta}$. In other words, infinitesimally divisible channels are solutions of the time-dependent GKLS equation.
\\ \\
For any qubit channel, it is infinitesimally CP-divisible if and only if, the singular values $\{s_1,s_2,s_3\}$ of the Lorentz normal form of its $T$- matrix satisfy the following relation \cite{wolf2008dividing}:
$$s_{min}^2\ge s_1s_2s_3 > 0.$$
In the appendix \ref{app:lorentznormal}, we calculate the Lorentz normal form and its singular values for the ergodic channel given in equation\ \eqref{eq:ergchnl}.
\begin{figure}[h!]
    \centering
    \includegraphics[width=1\linewidth]{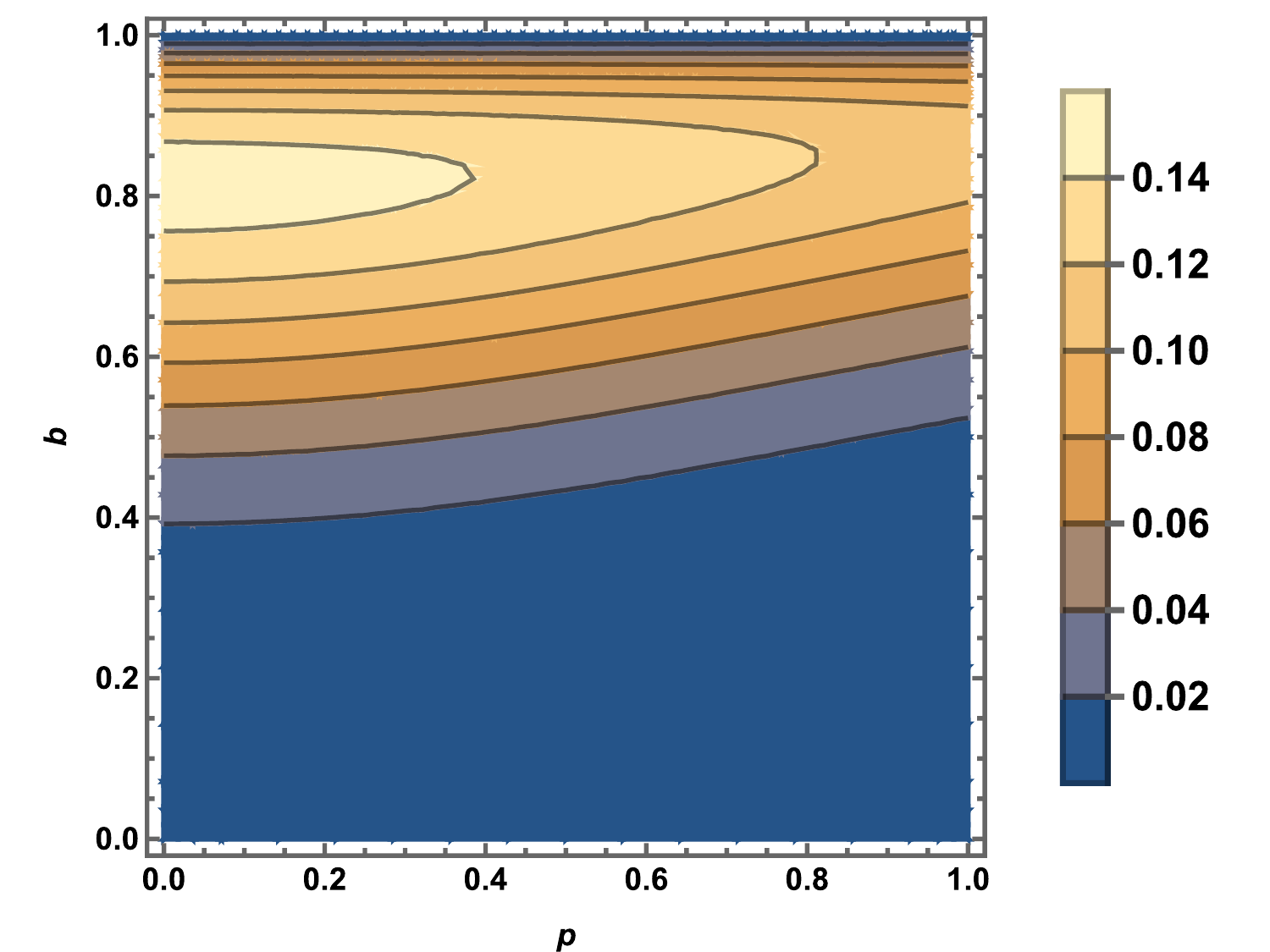}
    \caption{Contour plot of the difference $s_{min}^2-s_1s_2s_3$ as a function of $b$ (the length of the Bloch vector of the state $\tau$) and the probability $p$. We see that the infinitesimal divisibility condition is obeyed for all such channels.}
    \label{contour1}
\end{figure}
In figure \ref{contour1} we plot the difference $s_{min}^2-s_1s_2s_3$ as a function of the length of the Bloch vector $b$ and the probability value $p$. We see that the ergodic channel is always an infinitesimally CP-divisible channel, independent of the state $\tau$ or the probability function $p$, and therefore, it is always a solution of a CP-divisible dynamics. In other words, a qubit ergodic channel can be achieved through a Markovian process. This also tells us that the non-Markovianity of the dynamical qubit ergodic channel must depend only on the time-dependent function $p(t)$. The independence of the infinitesimal divisibility on the state $\tau$ is consistent with the result in the previous sub-section that the RHP measure of non-Markovianity of the ergodic channel is independent of the fixed state.

\section{Calculating the Lorentz normal form}\label{app:lorentznormal}
For a general CPTP qubit channel $\Lambda$, with the Pauli basis representation $T_\Lambda=\begin{pmatrix}
        1& \vec{0}^T\\
        \vec{v}& R
    \end{pmatrix}$ , $\exists$ rank-1 completely positive maps $L_1$ and $L_2$ such that $\Sigma=L_1T_\Lambda L_2$ is proportional to $\begin{pmatrix}
        1& \vec{0}^T\\
        \vec{v'}& \Delta
    \end{pmatrix}$ and has one of the three forms (called the Lorentz normal form) \cite{verstraete2001local}:
    \begin{enumerate}
        \item $v'=0$ and $\Delta$ is diagonal.
        \item $v'=(0,0,2/3)$ and $\Delta=\text{diag}(x/\sqrt{3},x/\sqrt{3},1/3)$. 
        \item $v'=(0,0,1)$ and $\Delta=0$. This is a Kraus rank 2 channel and corresponds to a pin map.
    \end{enumerate}
We will restrict to the case where $T_\Lambda$ is a full rank matrix defined in equation \eqref{Tmatrix}. Define a Hermitian operator $C=T_\Lambda T_\Lambda^\dagger$. Diagonalizing this matrix we get $P^{-1}CP=D=\text{diag}(e_1,e_2,e_3,e_4)$, where the eigenvalues of $C$ are given by:
\begin{equation}
\begin{array}{ll}
\begin{pmatrix}
    e_1\\ e_2 \\ e_3 \\ e_4
\end{pmatrix}=\\
\\
\begin{pmatrix}
1+p ^2 +(1-p) ^2 b^2+\sqrt{(1-p) ^2 \left(b^2+1\right) \left((1-p) ^2 b^2+(p +1)^2\right)}/2\\
    p^2\\
    p^2\\
  1+p ^2 +(1-p) ^2 b^2-\sqrt{(1-p) ^2 \left(b^2+1\right) \left((1-p) ^2 b^2+(p +1)^2\right)}/2
\end{pmatrix}
\end{array}
\end{equation}
Here $b$ is the length of the Bloch vector of the state $\tau$ and $p$ is a probability function. We note that for all values of $b$ and $p$, the eigenvalues are ordered as $e_1\ge e_2= e_3 \ge e_4$.
Similarly, define $C'=T_\Lambda^\dagger T_\Lambda$ such that $P'^{-1}C'P'=D$. Note that $P$ and $P'$ are Lorentz transformations since $P^T P=I$ and similarly for $P'$. One can see that $PT_\Lambda P'=\sqrt{D}$, and therefore, defining $L_1=P/\sqrt{e_1}$ and $L_2=P'$ we have $\Sigma=\sqrt{D}/\sqrt{e_1}$, which has the Lorentz normal form of type 1. We have  $\Delta=\text{diag}(s_1,s_2,s_3)$ where
$$
\begin{pmatrix}
    s_1\\s_2\\
    s_3
\end{pmatrix}=
\begin{pmatrix}
    \sqrt{e_2/e_1}\\
    \sqrt{e_3/e_1}\\
   \sqrt{e_4/e_1}
\end{pmatrix}.
$$
Here we have $s_1=s_2\ge s_3$ for all values of $p$ and $b$. 
\bibliographystyle{apsrev4-1}
\bibliography{sorsamp}
\end{document}